\numberwithin{equation}{section}
\newtheorem{theorem}{Theorem}[section]
\newtheorem{lemma}[theorem]{Lemma}
\newtheorem{remark}[theorem]{Remark}
\newtheorem{example}[theorem]{Example}
\newtheorem{proposition}[theorem]{Proposition}
\newtheorem{definition}[theorem]{Definition}
\newtheorem{corollary}[theorem]{Corollary}
\newcommand{\R}{\mathbb{R}}
\newcommand{\C}{\mathbb{C}}
\newcommand{\Z}{\mathbb{Z}}
\newcommand{\N}{\mathbb{N}}
\newcommand{\be}{\begin{equation}}
\newcommand{\ee}{\end{equation}}
\newcommand{\cala}{{\mathcal{A}}}
\newcommand{\calP}{{\mathcal{P}}}
\newcommand{\bthe}{\begin{theorem}}
\newcommand{\ethe}{\end{theorem}}
\newcommand{\ben}{\begin{enumerate}}
\newcommand{\een}{\end{enumerate}}
\newcommand{\beq}{\begin{equation}}
\newcommand{\eeq}{\end{equation}}
\newcommand{\ble}{\begin{lemma}}
\newcommand{\ele}{\end{lemma}}
\newcommand{\bde}{\begin{definition}}
\newcommand{\ede}{\end{definition}}
\newcommand{\bco}{\begin{corollary}}
\newcommand{\eco}{\end{corollary}}
\newcommand{\bpr}{\begin{proposition}}
\newcommand{\epr}{\end{proposition}}
\newcommand{\bexam}{\begin{example}\rm}
\newcommand{\eexam}{\halmos\end{example}}
\newcommand{\beao}{\begin{eqnarray*}}
\newcommand{\eeao}{\end{eqnarray*}\noindent}
\newcommand{\beam}{\begin{eqnarray}}
\newcommand{\eeam}{\end{eqnarray}\noindent}
\newcommand{\barr}{\begin{array}}
\newcommand{\earr}{\end{array}}
\newcommand{\bproof}{\begin{proof}}
\newcommand{\eproof}{\end{proof}}
\newcommand{\eps}{\varepsilon}
\newcommand{\var}{{\rm var}}
\def\bbn{{\Bbb N}}
\newcommand{\halmos}{\quad\hfill\mbox{$\Box$}}
\newcommand{\CC}[1]{{\color{red} #1}}
\begin{document}

\begin{center}
{\LARGE  {Infinite-body optimal transport with Coulomb cost \\[2mm]}}  

\normalsize
\vspace{0.4in}

Codina Cotar$^1$, Gero Friesecke$^2$ and Brendan Pass$^3$  \\[1mm]
$^1\,$ Department of Statistical Science, University College London \\
$^2\,$Department of Mathematics, Technische Universit\"at M\"unchen  \\
$^3\,$ {Department of Mathematical and Statistical Sciences}, University of Alberta \\
c.cotar@ucl.ac.uk, gf@ma.tum.de, pass@ualberta.ca\\[2mm]
\end{center}

\noindent
\normalsize
{\bf Abstract.} We introduce and analyze symmetric infinite-body optimal transport (OT) problems with cost function of pair potential form.
We show that for a natural class of such costs, the optimizer is given by the independent product measure all of whose factors are given by the one-body marginal. This is in striking contrast to standard finite-body OT problems, in which the optimizers are typically highly correlated, as well as to infinite-body OT problems with Gangbo-Swiech cost. 
Moreover, by adapting a construction from the study of exchangeable processes in probability theory, we prove that the corresponding $N$-body OT problem is well approximated by the infinite-body problem. 

To our class belongs the Coulomb cost which arises in many-electron
quantum mechanics. The optimal cost of the Coulombic N-body OT problem as a function of the one-body marginal density is known in the physics and quantum chemistry literature under the name 
{\it SCE functional}, and arises naturally as the semiclassical limit of the celebrated Hohenberg-Kohn functional. Our results imply that in the inhomogeneous high-density limit (i.e. $N\to\infty$ with arbitrary fixed inhomogeneity profile $\rho/N$), the SCE functional converges to the mean field functional.  We also present reformulations of the infinite-body and N-body OT problems as two-body OT problems with representability constraints.
\vspace{0.2in}

\noindent Keywords:  N-representability, density functional theory, Hohenberg-Kohn functional, N-body optimal transport, infinite-body optimal transport, Coulomb cost,  exchange-correlation functional, de Finetti's Theorem, finite exchangeability, N-extendability

\noindent AMS Subject classification: 49S05, 65K99, 81V55, 82B05, 82C70, 92E99, 35Q40


\vspace{0.2in}

\section{Introduction}

{\bf Semi-classical electron-electron interaction functional and connection to optimal transport.} 
This work is motivated by, and contributes to, the longstanding quest in physics, chemistry and mathematics to design and justify approximations to the energy functional of many-electron quantum mechanics in terms of the one-body density. 

A simplified yet still formidable challenge consists in understanding the following ''semi-classical'' interaction energy functional obtained by a constrained search
over $N$-point densities with given one-body density $\rho$. 
This functional, introduced in the physics literature by Seidl, Perdew, Levy, Gori-Giorgi, and Savin \cite{Seidl99, SPL99, SGS07}, is given by 
\be \label{FOTN}
   V_{ee}^{SCE}[\rho] := \inf_{\gamma_N\in\calP^N_{sym}(\R^{3}),\,\gamma_N\mapsto\rho/N} C_N[\gamma_N], 
\ee
where $\rho$ is a given nonnegative function on $\R^3$ with $\int_{\R^3} \rho = N$ (physically: the total
electron density of an atom or molecule with $N$ electrons) and
\be \label{CN}
   C_N[\gamma_N] := \int_{\R^{3N}} 
   \sum_{1\le i<j\le N} \frac{1}{|x_i-x_j|} d\gamma_N(x_1,\ldots,x_N).
\ee
Here $\calP_{sym}^{N}(\R^3)$ is the space of probability measures $\gamma_N$ on $\R^{3N}$ which satisfy the symmetry condition
\be \label{symmetric}
    \gamma_N(A_1\times\cdots\times A_N) = \gamma_N(A_{\sigma(1)}\times\cdots\times A_{\sigma(N)})
    \mbox{ for all Borel sets}~A_1,\ldots,A_N\subseteq\R^3 \mbox{ and all permutations }\sigma,
\ee
and the notation $\gamma_N\mapsto\rho/N$ means that $\gamma_N$ has one-body density $\rho$ (physics terminology) or equivalently equal
$\R^3$-marginals $\rho/N$ (probability terminology),
\be \label{marginals}
    \gamma_N(\R^{3(i-1)}\times A_i \times \R^{3(N-(i-1))}) = \int_{A_i}\frac{\rho(x)}{N}\,dx \mbox{ for all }A_i\subseteq\R^3
    \mbox{ and all }i=1,\ldots,N.
\ee
The normalization factor $1/N$ in (\ref{FOTN}) and (\ref{marginals}) is owed to the convention in many-electron quantum mechanics 
that the one-body density $\rho$ should integrate to the number of particles in the system, i.e. $\int_{\R^3} \rho = N$, whereas
the marginal density in the sense of probability theory, denoted in the sequel by $\mu$, should integrate to $1$. The functional (\ref{FOTN}) is commonly called the {\it SCE functional}, where the acronym SCE stands for strictly correlated electrons; the fact that e.g. for $N=2$, minimizers concentrate on lower-dimensional sets of form $x_2=T(x_1)$ (see (\ref{Monge}) below) has the physical interpretation that given the position of the first electron, the position of the second electron is strictly determined. The connection of the functional (\ref{FOTN}) with many-electron quantum mechanics which motivated this work is explained at the end of this Introduction. 
\\[2mm]
We remark that dropping the symmetry requirement on $\gamma_N$ would not alter the minimum value in (\ref{FOTN}), since the functional $C_N$ takes the same value on a nonsymmetric measure as on its symmetrization.
\\[2mm]
Because of the appearance of the $N$-particle
configurations $(x_1,\ldots,x_N)$ and of the $N$-body cost $\sum_{i<j}1/{|x_i-x_j|}$ in {$C_N[\gamma_N]$}, we call this
functional an $N$-{\it body mass transportation functional} or {\it an optimal transport problem with $N$ marginals}, and the problem (\ref{FOTN}) of minimizing it an $N$-{\it body optimal transport problem}. The functional $V_{ee}^{SCE,N}$ can be interpreted as the {\it minimum cost of an optimal transport problem as a functional of the marginal measure}. In the case $N=2$, one is dealing with a standard (two-body or two-marginal) optimal transport problem of form
$$
    \mbox{Minimize }\int_{\R^{2d}} c(x_1,x_2) d\gamma_2(x_1,x_2) \mbox{ over } \gamma_2\in\calP(\R^{2d})~ \\
    \mbox{subject to }\gamma_2(A\times\R^3)=\gamma_2(\R^3\times A)=\mu(A) \mbox{ for all }A\subseteq\R^3,
$$
where $c\, : \, \R^d\times\R^d\to\R\cup\{\infty\}$ is a {\it cost function} and $\calP(\R^{2d})$ is the space of probability measures on $\R^{2d}$. 
\\[2mm]
{\bf Previous results} 
It was not realized until recently \cite{CFK11, BPG12} that the minimization problem in (\ref{FOTN}) has the form of an optimal transport problem and
can, especially in the case $N=2$, be fruitfully analyzed via methods from OT theory.

OT problems with two marginals have been studied extensively in the mathematical literature for a large variety of cost functions; see, for example \cite{Bre87}, and \cite{GM96} for some influential results in the area and \cite{Vill09} for a comprehensive treatment. A central insight in this setting is that, under fairly weak conditions on the cost function and marginals, the optimal measure is unique and of Monge type, i.e. it concentrates on the graph of a map over $x_1$. That is to say,
\be \label{Monge}
   \gamma_2=(I\times T)_\sharp \mu \mbox{ (OT notation) or equivalently }\gamma_2(x,y)=\mu(x)\delta_{T(x)}(y) \mbox{ (physics notation) for some map }
   T\, : \, \R^d\to\R^d.
\ee
Even though the Coulomb cost lies outside the costs treated in standard OT theory (where positive power costs like $|x-y|$ or $|x-y|^2$ are prototypical),
the result (\ref{Monge}) has recently been extended to the 2-body OT problem with Coulomb cost, (\ref{FOTN}) with $N=2$ \cite{CFK11, BPG12}, confirming
earlier nonrigorous results in the physics literature \cite{Seidl99, SGS07}.  
\\[2mm]
Much less is known about $N$-body OT problems with $N\ge 3$. Here the OT literature has focused on special cost functions
\cite{Rus91}, \cite{RusUck97}, \cite{GS98},  \cite{Hein02}, \cite{CarNaz08}, \cite{Pass10}, \cite{Pass11}, \cite{CFK2},  \cite{Pass12}, \cite{Pass12c}, \cite{BPG12}, \cite{CFKMP13}, \cite{CD13}, \cite{CDD13}, \cite{GhouMoa13}, \cite{KimPass13} and the structure of solutions is highly dependent on the cost function. For certain costs, solutions concentrate on graphs over the first marginal, as in the two body case, while for others the solutions can concentrate on high dimensional submanifolds of the product space. In particular, despite its importance in electronic structure theory, very little is known regarding the structure of the solutions of the $N$-body OT problem with Coulomb cost (\ref{FOTN}).  Let us note, however, that the study of Monge-Kantorovich problems with symmetry constraints has been intitiated in \cite{GhouMoa13a} and continued in \cite{GhouMoa13}, \cite{GhouMaur13}, \cite{GalGhou13}, \cite{CDD13} and \cite{CD13}, the last two papers dealing with the Coulomb cost.
\\[2mm]
{\bf Main results} 
Here we focus on problem (\ref{FOTN}) in the {\it regime of large $N$}, i.e. the ''opposite'' regime of the hitherto best understood case $N=2$. We present two main results. The first introduces and analyzes the associated 
{\it infinite-body OT problem}. Remarkably, for a natural class of costs which includes the Coulomb cost, the infinite-body problem is uniquely minimized by the 
independent product measure all of whose factors are given by the one-body marginal. See Theorem 1.1 below for the precise statement. This stands in surprising contrast to the pair of recent papers [Pass12a] 
and [Pass12b]. There costs of Gangbo-Swiech type are analyzed and it is shown that the optimizer is a Monge type solution; that is, any two of the variables are completely dependent rather than completely independent. Our second main result says that the corresponding $N$-body OT problem is well approximated by the infinite-body problem; in particular we show that the optimal cost per particle pair of the $N$-body problem converges to that of the infinite-body problem
as $N$ gets large. See Theorem 1.2 for the precise statement.
\\[2mm]
{\bf Connection with many-electron quantum mechanics and the Hohenberg-Kohn functional}
Next let us explain the connection with, and implications for, many-electron quantum mechanics.
Heuristically, the functional $V_{ee}^{SCE}$ is the semiclassical limit of the celebrated Hohenberg-Kohn functional \cite{HK64},
\be \label{semiclass}
     V_{ee}^{SCE}[\rho] = \lim_{\hbar\to 0} F^{HK}[\rho], 
\ee
where
\be \label{FHK}
     F^{HK}[\rho]:=\min_{\Psi\in\cala_N, \, \Psi\mapsto\rho} \langle \Psi , ( \hbar^2 \widehat{T} +  \widehat{V}_{ee})\Psi\rangle.  
\ee
Here $\widehat{T}=-\frac12\Delta$, $\Delta$ is the Laplacian on $\R^{3N}$, and the resulting contribution to the functional is
the quantum mechanical kinetic energy of the system, $\widehat{V}_{ee}$ is the electron-electron operator which acts
by multiplication with the function $V_{ee}(x_1,..,x_N)=\sum_{1\le i<j\le N}1/|x_i-x_j|$, ${\cala_N}$ denotes the set of
antisymmetric, square-integrable functions $\Psi \, : \, (\R^3\times\Z_2)^N\to\C$ with square-integrable gradient 
and $L^2$ norm $1$, $\langle \cdot,\cdot\rangle$ is the $L^2$ inner product, and the notation $\Psi\mapsto\rho$ means that the associated $N$-point position density 
\beq \label{density}
   \gamma_N(x_1,..,x_N) = \sum_{s_1,..,s_N\in\Z_2} |\Psi(x_1,s_1,..,s_N,x_N)|^2
\eeq
satisfies $\gamma_N\mapsto\rho/N$. The class of single-particle densities on which $F^{HK}$ is defined is the image of ${\cal A}_N$ under the map $\Psi\mapsto\rho$. By a result of Lieb [Li83], 
this class equals the set of functions $\rho \, : \, \R^3\to\R$ which are nonnegative, have integral $N$, and have the property that $\sqrt{\rho}$ belongs to the Sobolev space $H^1(\R^3)$. The HK functional constituted the birth of modern density functional theory (DFT). DFT approximates 
$F^{HK}$ by simpler yet still remarkably accurate functionals of the one-body density amenable to efficient numerical minimization, and is the currently most widely used method for numerical electronic structure computations for complex systems ranging from condensed matter over surfaces and nanoclusters to large
molecules. For further information about the HK functional and mathematical aspects of the challenge to approximate it by computationally simpler functionals we refer to our recent paper \cite{CFK11} and the literature cited therein.
A rigorous justification of eq. (\ref{FHK}) is given in \cite{CFK11} (for N=2) and \cite{CFK2} (for an arbitrary number of particles). While the proof itself shall not concern us here, we remark that there is indeed something to prove:
minimizers $\gamma_N$ of the limit problem in (\ref{FOTN}) are typically singular measures and hence do not arise as $N$-point densities (\ref{density}) of any quantum wavefunction $\Psi\in\cala_N$, making it a nontrivial task to construct a wavefunction with precisely the same one-body density as $\gamma_N$ for which the quantum expectation value on the right hand side of (\ref{FHK}) is well defined and close to the value $V_{ee}^{SCE}[\rho]=C_N[\gamma_N]$ on the left hand
side of (\ref{semiclass}). 
\\[2mm]
Together with the companion result (\ref{semiclass}), Theorem 1.2 says that the Hohenberg-Kohn functional $F^{HK}$ is rigorously asymptotic, in the regime of small $\hbar$, a large number of electrons, and a fixed inhomogeneity profile $\rho/N$, to the mean field functional 
\be \label{J}
   J[\rho] = \frac12 \int_{\R^6} \frac{1}{|x_1-x_2|}\rho(x_1)\rho(x_2)\, dx_1\, dx_2.
\ee
See Corollary \ref{IHDL} below for the precise statement. 
This result answers an open question raised by us in \cite{CFKMP13}, where we observed this correspondence for a toy model (one-body densities supported on two points, cost favouring different-site occupancy over same-site occupancy) for which the $N$-body OT problem in (\ref{FOTN}) can be solved explicitly. 
\\[2mm]
{\bf Precise statement of main results}
With a view to the application to density functional theory, we will work in the following setting even though some of our main results could be stated and proved for more general spaces, such as Polish spaces for 
Theorem \ref{convergence}.

Let $(\Omega_i^d, {\cal {F}}_i^d):=(\R^d, {\cal B}(\R^d))$, where $i=1,2,\ldots, N,\ldots,$ and $d\ge 1$. The underlying $\sigma$-field is the
Borel $\sigma$-field. Let $(\Omega_\infty^d, {\cal B}_\infty^d)$ be defined as the cartesian product of  $(\Omega_i^d, {\cal F}_i^d$, that is, $\Omega_\infty^d:=\prod_{i=1}^\infty\Omega_i^d$, and ${\cal B}_\infty^d$ is the Borel $\sigma$-algebra generated by the open subsets of $\Omega_\infty^d$ of the form $\prod_{i=1}^\infty A_i$, $A_i\in{\cal B}(\R^d)$, where $A_i=\Omega_i^d$ for all but a finite number of $i$. To simplify the notation, we will write $(\R^d)^\infty$ instead of $\Omega_\infty^d$. Throughout the paper, if $\mu\in\calP(\R^d)$ has a Lebesgue-integrable density, the latter is also denoted by $\mu$.

For all $N\in\N, N\ge 2$, let the \textit{cost function} $c_N:\underbrace{\R^d\times\ldots\times\R^d}_\textrm{N times} \rightarrow {\R}_+\cup\{\infty\}$ be defined by
\begin{equation}
\label{generalcost}
    c_N(x_1,\ldots,x_N):=\sum_{1\le i<j\le N} c(x_i,x_j),
\end{equation}
where $c:\R^{2d}\rightarrow [0,\infty)\cup\{\infty\}$ is assumed throughout to be Borel-measurable and symmetric (the latter means that $c(x,y)=c(y,x)$ for all $x,y\in\R^d$). For any $N\in\N$, and any infinite-dimensional probability measure $\gamma$ belonging to the space ${\calP}^{\infty}_{sym}(\R^d)$ defined below, let
\begin{equation}
\label{cnformula}
  {C}_N[\gamma] =  \int_{(\R^d)^\infty} c_N(x_1\ldots,,x_N)d\gamma(x_1,x_2,\ldots,x_N,\ldots)= \sum_{1\le i<j\le N} \int_{(\R^d)^\infty} c(x_i,x_j)d\gamma(x_1,x_2\ldots,x_N,\ldots).
\end{equation}
Here the domain of this functional is the space $\calP_{sym}^\infty(\R^d)$ of symmetric Borel probability measures on $(\R^d)^\infty$. For a more detailed discussion of the notion of infinite-dimensional symmetric Borel measures see for example \cite{DF80}. Symmetric means that for all $N$ and for all $N$-tupel 
$(i_1,..,i_N)$ of indices with $1\le i_1<i_2<...<i_N$, 
$$
    \gamma(\R^{d(i_1-1)}\times A_{i_1}\times\R^{d(i_2-i_1-1)}\times A_{i_2}\times\ldots\times A_{i_N}\times\R^d\times\cdots) = \gamma(\R^{d(i_1-1)}\times 
    A_{\sigma({i_1})}\times\R^{d(i_2-i_1-1)}\times A_{\sigma({i_2})}\times\ldots \times A_{\sigma(i_N)}\times\R^d\times\cdots) ,
$$ 
for all Borel sets $A_{i_1}, A_{i_2},\ldots A_{i_N}\subset\R^d$ and for all permutations $\sigma$ of $\{ i_1,i_2,\ldots, i_N\}$. 
As $N\to\infty$, the problem of minimizing $C_N$ subject to the marginal constraint $\gamma\to\mu$ turns into a meaningful, and -- as we shall see -- very interesting, limit problem:
\be \label{FtildeOTgen}
   \mbox{Minimize } C_\infty[\gamma] := \lim_{N\to\infty} \frac{1}{{N\choose 2}} C_N[\gamma] \mbox{ over infinite-dimensional probability measures }
   \gamma\in\calP^\infty_{sym}(\R^d) \mbox{ with }\gamma\to\mu.
\ee
Here the standard notation $\gamma\mapsto\mu$ means that $\gamma$ has one-body marginal $\mu$, i.e. 
$\gamma(A\times \prod_{i=1}^\infty \R^d) = \mu(A)$ for all Borel $A\subset\R^d$. A key object of interest is the optimal cost of the problem (\ref{FtildeOTgen}) as a function of the marginal measure, 
\be \label{FOTinfty}
     {F}^{OT}_\infty[\mu] = \inf_{\gamma\in\calP_{sym}^\infty(\R^d),\, \gamma\mapsto\mu} C_\infty[\gamma].
\ee
Because of the appearance of the infinite particle
configurations $(x_1,\ldots ,x_N,\ldots)$ and of an infinite-body cost, we call the problem (\ref{FtildeOTgen}) an {\it infinite-body (or infinite-marginal) optimal transport problem}. 
\\[2mm]
The large-$N$ limit of the DFT functional $V_{ee}^{SCE}$ described in the Introduction corresponds to the case $d=3$ and the \textit{Coulomb cost} $c(x,y)=\frac{1}{|x-y|}$. In this case, the functional (\ref{FOTinfty}) becomes 
{\be \label{FtildeOT}
     {F}^{OT}_\infty[\mu] := \inf_{\gamma\in\calP_{sym}^\infty(\R^3),\,\gamma\mapsto\mu} \lim_{N\rightarrow\infty}\frac{1}{{N\choose 2}}C_N[\gamma], \;\;\;
     {C}_N[\gamma] = \int_{(\R^3)^\infty} 
     \sum_{1\le i<j\le N} \frac{1}{|x_i-x_j|} d\gamma(x_1,\ldots,x_N,\ldots).
\ee 
Our first main result is the following. Here and below, $\hat{f}$ denotes the Fourier transform of the function $f\in L^1(\R^d)$, defined by $\hat{f}(k)=\int_{\R^d} e^{-ik\cdot x} f(x)\, dx$, and $C_b(\R^d)$ denotes the space of bounded continuous functions on $\R^d$. 
\begin{theorem} (Mean field theory as exact solution to infinite-body optimal transport)
\label{definetti1}
\begin{itemize}
\item[(a)] Let $c \, : \, \R^{2d}\to\R_+\cup\{\infty\}$  in (\ref{generalcost}) be of the form $c(x,y) =\ell(x-y)$, where $\ell(z)=\ell(-z)$ for all $z\in\R^d$ 
(i.e. $c$ is symmetric), and either \\[1mm]
(i) $\ell\in L^1(\R^d)\cap C_b(\R^d)$, $\hat{\ell}\ge 0$ or \\
(ii) $d=3$, $\ell(z)=1/|z|$ (Coulomb cost). \\[1mm]
Let $\mu\in{\cal P}(\R^d)$ be a measure such that 
\be \label{finitenesscdn}
                    \int_{\R^{2d}} c(x,y) \, \mu(dx)\mu(dy)<\infty.
\ee 
Then the independent measure 
\be \label{indep}
     \gamma_0=\mu^{\otimes\infty}=\mu\otimes\mu\otimes\cdots
\ee
is a minimizer of the infinite-body optimal transport problem (\ref{FtildeOTgen}), and the optimal
cost is the mean field functional, i.e. 
\begin{equation}
  \label{infOTexplic}
  {F}^{OT}_{\infty}[\mu]=\int_{\R^{2d}} c(x,y) \, \mu(dx)\mu(dy).
\end{equation}
\item[{(b)}] If in addition $\hat{\ell}(z)$ is strictly bigger than zero for all $z$, then the
independent measure (\ref{indep}) is the unique minimizer of the problem (\ref{FtildeOTgen}). 
\end{itemize}
\end{theorem}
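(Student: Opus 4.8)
The plan is to reduce the infinite-body problem to a convex optimization over one-body measures by means of the de Finetti representation theorem for infinite exchangeable sequences, and then to close the argument with a Jensen-type inequality for the quadratic functional $E[\nu]:=\int_{\R^{2d}}c(x,y)\,\nu(dx)\,\nu(dy)$, which the Fourier-positivity hypothesis $\hat\ell\ge 0$ makes convex. First I would note that exchangeability of any $\gamma\in\calP^\infty_{sym}(\R^d)$ forces $\int c(x_i,x_j)\,d\gamma=\int c(x_1,x_2)\,d\gamma$ for every pair $i<j$, so that $\binom{N}{2}^{-1}C_N[\gamma]=\int c(x_1,x_2)\,d\gamma$ is in fact constant in $N$; hence $C_\infty[\gamma]=\int_{\R^{2d}}c\,d\gamma_2$, where $\gamma_2$ is the two-body marginal of $\gamma$ (there is no integrability issue since $c\ge 0$). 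By the de Finetti theorem (see \cite{DF80}) there is a probability measure $P$ on $\calP(\R^d)$ with $\gamma=\int_{\calP(\R^d)}\nu^{\otimes\infty}\,dP(\nu)$, and the constraint $\gamma\mapsto\mu$ is equivalent to the barycenter condition $\int_{\calP(\R^d)}\nu\,dP(\nu)=\mu$. Since the two-body marginal of $\nu^{\otimes\infty}$ is $\nu\otimes\nu$, and since $\nu\mapsto E[\nu]$ is measurable (being lower semicontinuous, as $c$ is), Tonelli's theorem gives
\[
 C_\infty[\gamma]=\int_{\calP(\R^d)}E[\nu]\,dP(\nu),
\]
so that \eqref{FOTinfty} becomes: minimize $\int E[\nu]\,dP(\nu)$ over $P$ with barycenter $\mu$.

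The second step is to diagonalize $E$ in Fourier space. In case (i), $\ell\in L^1\cap C_b$ together with $\hat\ell\ge 0$ forces $\hat\ell\in L^1$ (a Bochner-type argument: $\int\hat\ell=(2\pi)^d\ell(0)<\infty$), whence Fourier inversion yields $E[\nu]=(2\pi)^{-d}\int_{\R^d}\hat\ell(k)\,|\hat\nu(k)|^2\,dk$; in case (ii) the classical identity for the Coulomb energy in $\R^3$ gives the same formula with $\hat\ell(k)=\mathrm{const}/|k|^2$, which is $>0$ for $k\neq 0$ and locally integrable in dimension $3$ (the value at $k=0$ is irrelevant since $\hat\nu(0)=\hat\mu(0)=1$). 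Because $\nu\mapsto\hat\nu(k)$ is affine, $\nu\mapsto|\hat\nu(k)|^2$ is convex, so $E$ is convex on $\calP(\R^d)$. Fixing $P$ with barycenter $\mu$ and writing $\hat\mu(k)=\int\hat\nu(k)\,dP(\nu)$,
\[
 \int E[\nu]\,dP(\nu)-E[\mu]=\frac{1}{(2\pi)^d}\int_{\R^d}\hat\ell(k)\Bigl(\int|\hat\nu(k)|^2\,dP(\nu)-\bigl|\textstyle\int\hat\nu(k)\,dP(\nu)\bigr|^2\Bigr)\,dk\ \ge\ 0,
\]
the inner bracket being the nonnegative variance of $\nu\mapsto\hat\nu(k)$ under $P$. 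Since $\gamma_0=\mu^{\otimes\infty}$ corresponds to $P=\delta_\mu$ and attains $C_\infty[\gamma_0]=E[\mu]=\int_{\R^{2d}}c\,d\mu\,d\mu$, which is finite by \eqref{finitenesscdn}, this proves that $\gamma_0$ minimizes \eqref{FtildeOTgen} and establishes \eqref{infOTexplic}; that is, part (a).

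For part (b), the extra hypothesis $\hat\ell>0$ everywhere makes the measure $\hat\ell(k)\,dk$ equivalent to Lebesgue measure, so equality in the displayed inequality forces the variance of $\nu\mapsto\hat\nu(k)$ to vanish for a.e.\ $k$; by Fubini, for $P$-a.e.\ $\nu$ one has $\hat\nu(k)=\hat\mu(k)$ for a.e.\ $k$, hence $\hat\nu\equiv\hat\mu$ by continuity and $\nu=\mu$, i.e.\ $P=\delta_\mu$ and $\gamma_0$ is the unique minimizer. I expect the main obstacle to lie not in this convexity argument but in the careful justification of the reduction step and of the Fourier representation: namely, verifying the measurability/lower-semicontinuity needed to apply de Finetti and Tonelli, and, above all, legitimizing the Fourier-space formula for $E[\nu]$ in the Coulomb case (ii) for arbitrary probability measures $\nu$ — not merely those with smooth densities — including the behaviour of the integrand near $k=0$ and the reconciliation of the $L^1\cap C_b$ framework of (i) with the singular, non-integrable kernel $1/|z|$ of (ii).
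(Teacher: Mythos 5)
Your proposal follows essentially the same route as the paper: reduce $C_\infty[\gamma]$ to $\int c\,d\gamma_2$ by exchangeability (the paper's Lemma \ref{reduction}), apply the de Finetti representation to write $\gamma_2=\int Q\otimes Q\,d\nu(Q)$ (the paper's Proposition \ref{definetti} and Theorem \ref{reduction1}), diagonalize the pair energy in Fourier variables so that the excess cost over the mean-field value becomes $\int\hat\ell(k)\bigl(\var_\nu\operatorname{Re}\hat Q(k)+\var_\nu\operatorname{Im}\hat Q(k)\bigr)\,dk$, and use positivity (resp.\ strict positivity plus injectivity of the Fourier transform on $\calP(\R^d)$) for parts (a) (resp.\ (b)). This is exactly the paper's argument, down to the variance decomposition appearing in (\ref{varterms}).

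There is, however, one genuine gap that you flag yourself but do not close, and it is precisely the place where the paper has to do real work. In case (ii) the identity $E[\nu]=(2\pi)^{-d}\int\hat\ell(k)|\hat\nu(k)|^2\,dk$ is \emph{not} a ``classical identity'' for arbitrary probability measures $\nu$ with finite Coulomb energy: $\hat\ell(k)=4\pi/|k|^2$ is not in $L^1(\R^3)$ (it fails to be integrable at infinity), $\nu$ need not have an $L^2$ density, and the naive Fubini/Plancherel manipulations do not apply. This is exactly what the paper's Lemma \ref{Fourier} establishes, and its proof in the Appendix uses Gaussian mollification, monotone convergence (which crucially needs $\hat\ell\ge 0$), and, for the Coulomb case, Newton's screening theorem to pass to the limit on the left-hand side where the weak-$*$ convergence argument breaks down for the unbounded kernel. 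Without this step the proof of (a) and (b) for the Coulomb cost is incomplete.

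On the positive side, your treatment of case (i) is a small simplification over the paper's. You observe that $\ell\in L^1\cap C_b$ together with $\hat\ell\ge 0$ forces $\hat\ell\in L^1$ with $\int\hat\ell=(2\pi)^d\ell(0)$ (a Bochner-type fact, provable by testing against $e^{-\eps|k|^2}$ and using monotone convergence). Once $\hat\ell\in L^1$, the identity $E[\nu]=(2\pi)^{-d}\int\hat\ell|\hat\nu|^2$ for arbitrary $\nu\in\calP(\R^d)$ follows directly from Fourier inversion for $\ell$ and Fubini, since $|\hat\ell(k)e^{ik\cdot(x-y)}|=\hat\ell(k)$ is jointly integrable against $dk\,d\nu(x)\,d\nu(y)$. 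This sidesteps the regularization-of-$Q$ argument the paper uses in case (i), though it of course does nothing for the Coulomb case, where the regularization (and the screening theorem) remain indispensable.
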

Note that in case (ii), i.e. the Coulomb cost in dimension $d=3$, the strict positivity condition
$\hat{\ell}>0$ holds, because $\hat{\ell}(k)=4\pi/|k|^2$. Moreover by simple estimates (see e.g. eq. (5.21) in the proof of Theorem 5.6 in \cite{CFK11}) the finiteness condition in (a) holds for all $\mu\in L^1(\R^3)\cap L^3(\R^3)$; the latter is the natural $L^p$ type space into which the domain of the Hohenberg-Kohn functional embeds. As a consequence, the above results are valid for all densities of physical interest in DFT.  
However the Coulomb cost it is neither continuous nor does it belong to $L^1$. The obvious task to weaken the regularity assumptions in (i) so as to naturally include the Coulomb cost does not seem to be straightforward and lies beyond the scope of this article.
\\[2mm]
Our result stands in surprising contrast to the recent results in \cite{Pass12a, Pass12b} by one of us. For a class of of costs including the many-body quadratic cost $\sum_{i \neq j}|x_i-x_j|^2$ studied by Gangbo-Swiech \cite{GS98}, the optimizer of the infinite-body OT problem is demonstrated to be a Monge type solution; that is, any two of the variables are completely dependent, rather than completely independent as is the case for our class of costs.  This dichotomy exposes a fascinating sensitivity to the cost function in infinite-body optimal transport problems. This difference is not present in two-marginal problems, where fairly weak conditions on the cost which include both the quadratic and the Coulomb cost suffice to ensure Monge type solutions. 
A milder version of the dichotomy does however arise in the multi-body context, where for certain costs the solution can concentrate on high dimensional submanifolds of the product space \cite{CarNaz08}, \cite{Pass10}.  It does not seem to be until one gets to the infinite marginal setting, however, that complete independence of the variables becomes optimal for certain costs. The difference between the costs in our paper and those in \cite{Pass12a, Pass12b} can be expressed succinctly as positivity of the Fourier transform of $\ell$. Note that the latter is equivalent to the fact that $c(x,y)=\ell(x-y)$ is a {\it positive kernel}, i.e. associated integral operator $K\varphi(x):=\int_{\R^d}c(x,y)\varphi(y)\, dy$ satisfies $\langle \varphi, K\varphi\rangle\ge 0$ for all $\varphi\in C_0^\infty(\R^d)$. See Example \ref{Ex:nonposFT} (ii) in Section \ref{2} for a simple explicit example of a cost function which satisfies all the assumptions in Theorem \ref{definetti1} except positivity of the Fourier transform and for which the conclusion of the theorem fails.

The basic idea for the proof of Theorem \ref{definetti1} is to represent the competing infinite-dimensional probability measures in (\ref{FOTinfty}) via de Finetti's theorem, and identify the functional $C_\infty$ introduced in (\ref{FtildeOTgen}), with the help of Fourier transform calculus and elementary probability theory, as a sum of the mean field functional and a certain variance term minimized by completely independent measures.

Our second main result clarifies the relationship between the infinite-body optimal transport problem (\ref{FtildeOTgen}) 
and the corresponding $N$-body optimal transportation problem: 
\be \label{Nbodyproblem}
   \mbox{ Minimize }\tilde{C}_N[\gamma_N] :=  \int_{\R^{dN}} \sum_{1\le i<j\le N}c(x_i,x_j)\, d\gamma_N(x_1,x_2\ldots,x_N) 
   \mbox{ over }\gamma_N\in\calP^N_{sym}  (\R^d) \mbox{ satisfying }\gamma_N\mapsto\mu.
\ee
Here and below $\calP^N_{sym}(\R^d)$ denotes the set of Borel probability measures $\gamma_N$ on $\R^{Nd}$ which are symmetric, i.e. satisfy eq. (\ref{symmetric}) (with $\R^3$ replaced by $\R^d$). 
The optimal cost per particle pair as a function of the marginal measure will be denoted by 
$F^{OT}_N[\mu]$; that is to say, for arbitrary $\mu\in\calP(\R^{d})$ we set
\begin{equation}
\label{finOT0}
    {F}^{OT}_N[\mu] :=  \frac{1}{{N\choose 2}} \inf_{\gamma_N\in\calP_{sym}(\R^{dN}),\gamma_N\,\mapsto\mu}{\tilde{C}}_N[\gamma_N].
\end{equation}
We show:
\begin{theorem} (N-body cost approaches infinite-body cost)
\label{convergence}
Assume that $\mu\in\calP(\R^d)$ is a probability measure such that there exists a measure $\gamma_0\in \calP_{sym}^\infty(\R^{d})$ with $\gamma\mapsto\mu$ and $\int_{(\R^d)^\infty} c(x_1,x_2)d\gamma_0(x_1,x_2,\ldots)<\infty$. Let the cost function $c \, : \, \R^{2d}\to[0,\infty)\cup \{+\infty\}$ in 
(\ref{Nbodyproblem}) and (\ref{cnformula}) be Borel-measurable, symmetric, and either (i) bounded, or (ii) lower semi-continuous as a map with values into $[0,\infty)\cup\{+\infty\}$ endowed with its natural topology; ie, $c(x_j) \rightarrow\infty$ whenever $x_j \rightarrow x$ and $c(x) = \infty$.Then we have
\begin{equation}
\label{limitOT}
    F^{OT}_\infty[\mu]=\lim_{N\rightarrow\infty} {F}^{OT}_N[\mu].
\end{equation}
\end{theorem}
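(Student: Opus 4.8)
The plan is to reduce both $F^{OT}_N[\mu]$ and $F^{OT}_\infty[\mu]$ to two-body optimal transport problems and then exploit that representability by finitely many particles is more restrictive than representability by infinitely many. Since $\gamma_N$ (resp.\ $\gamma$) is symmetric, $\frac{1}{\binom N2}\tilde C_N[\gamma_N]=\int_{\R^{2d}}c\,d\gamma_N^{(2)}$ and $C_\infty[\gamma]=\int_{\R^{2d}}c\,d\gamma^{(2)}$, where $(\cdot)^{(2)}$ denotes the two-body marginal; writing $\calR_N(\mu)$ (resp.\ $\calR_\infty(\mu)$) for the set of two-body marginals of measures in $\calP^N_{sym}(\R^d)$ (resp.\ $\calP^\infty_{sym}(\R^d)$) with one-body marginal $\mu$, this gives $F^{OT}_N[\mu]=\inf_{\eta\in\calR_N(\mu)}\int c\,d\eta$ and $F^{OT}_\infty[\mu]=\inf_{\eta\in\calR_\infty(\mu)}\int c\,d\eta$. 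Restricting an infinitely exchangeable measure, resp.\ an $(N{+}1)$-body symmetric measure, to its first coordinates yields $\calR_\infty(\mu)\subseteq\calR_{N+1}(\mu)\subseteq\calR_N(\mu)$, so $N\mapsto F^{OT}_N[\mu]$ is nondecreasing and bounded above by $F^{OT}_\infty[\mu]$, which is finite by the hypothesis on $\gamma_0$. Hence $L:=\lim_N F^{OT}_N[\mu]$ exists and $L\le F^{OT}_\infty[\mu]$, and the whole task reduces to the reverse inequality $F^{OT}_\infty[\mu]\le L$.

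Fix near-minimizers $\gamma_N\in\calP^N_{sym}(\R^d)$ with $\gamma_N\mapsto\mu$ and $\int c\,d\gamma_N^{(2)}\le F^{OT}_N[\mu]+1/N\le L+1/N$. \emph{Case (i): $c$ bounded by $M$.} Here I would invoke the de Finetti--Diaconis--Freedman resampling construction: let $(X_1,\dots,X_N)\sim\gamma_N$, form the random empirical measure $m_N=\frac1N\sum_{i=1}^N\delta_{X_i}$, and set $\tilde\gamma_N:=\ey[m_N^{\otimes\infty}]$. Then $\tilde\gamma_N\in\calP^\infty_{sym}(\R^d)$ (a mixture of i.i.d.\ laws), $\tilde\gamma_N\mapsto\ey[m_N]=\mu$, and $\tilde\gamma_N^{(2)}=\tfrac1N(\mathrm{id},\mathrm{id})_\sharp\mu+\tfrac{N-1}{N}\gamma_N^{(2)}$ differs from $\gamma_N^{(2)}$ by $O(1/N)$ in total variation (the birthday-problem bound for a collision among two uniform draws from $N$ points). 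Therefore $F^{OT}_\infty[\mu]\le\int c\,d\tilde\gamma_N^{(2)}\le\int c\,d\gamma_N^{(2)}+O(M/N)\le L+O(1/N)$, and letting $N\to\infty$ settles case (i).

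\emph{Case (ii): $c$ lower semicontinuous with values in $[0,+\infty]$ (in particular the Coulomb cost).} Now the resampling device fails, because $(\mathrm{id},\mathrm{id})_\sharp\mu$ charges the diagonal, on which $c$ may be identically $+\infty$, so that $\tilde\gamma_N$ has infinite cost. Instead I would argue by compactness. The family $\{\gamma_N^{(2)}\}_N$ has all one-dimensional marginals equal to $\mu$ and is therefore tight; pass to a subsequence with $\gamma_{N_k}^{(2)}\to\eta$ weakly. Then $\eta$ has both marginals $\mu$, and by weak lower semicontinuity of $\nu\mapsto\int c\,d\nu$ for nonnegative lower semicontinuous $c$ (approximate $c$ from below by bounded continuous functions) we get $\int c\,d\eta\le\liminf_k\int c\,d\gamma_{N_k}^{(2)}\le L$. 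It remains to show $\eta\in\calR_\infty(\mu)$. For each fixed $m$, the set of measures in $\calP^m_{sym}(\R^d)$ with one-body marginal $\mu$ is weakly compact (tightness, plus weak closedness of the symmetry and marginal constraints), so its image $\calR_m(\mu)$ under the continuous map $\sigma\mapsto\sigma^{(2)}$ is weakly closed; since $\gamma_{N_k}^{(2)}\in\calR_m(\mu)$ for all $N_k\ge m$, we obtain $\eta\in\bigcap_m\calR_m(\mu)$. Finally I would prove $\bigcap_m\calR_m(\mu)=\calR_\infty(\mu)$: the sets $S_m:=\{\sigma\in\calP^m_{sym}(\R^d):\sigma\mapsto\mu,\ \sigma^{(2)}=\eta\}$ are nonempty and weakly compact, the coordinate-erasing maps carry $S_{m+1}$ into $S_m$, and passing to the stabilized sets $T_m:=\bigcap_{\ell\ge m}(\text{image of }S_\ell\text{ in }\calP^m_{sym}(\R^d))$ --- nonempty, compact, with surjective bonding maps --- the standard projective-limit argument produces a consistent family $(\sigma_m)_m$, to which Kolmogorov's extension theorem assigns some $\gamma\in\calP^\infty_{sym}(\R^d)$ with $\gamma\mapsto\mu$ and $\gamma^{(2)}=\eta$. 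Hence $\eta\in\calR_\infty(\mu)$ and $F^{OT}_\infty[\mu]\le\int c\,d\eta\le L$, which completes the proof.

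\textbf{Main obstacle.} The delicate step is the identity $\bigcap_m\calR_m(\mu)=\calR_\infty(\mu)$, that is, upgrading ``representable at every finite order'' to ``infinitely representable'': finitely exchangeable measures need not extend, so one cannot simply glue the $\sigma_m$ together and must run the projective-limit/compactness argument, including the stabilization that makes the bonding maps surjective. A secondary, Coulomb-specific nuisance is that $c$ is unbounded and only lower semicontinuous with values in $[0,+\infty]$, which rules out the clean total-variation estimate of case (i) and forces the weak-lower-semicontinuity route (with truncation of $c$).
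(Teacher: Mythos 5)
Your proof is correct and follows the same overall strategy as the paper: reduce both $F^{OT}_N$ and $F^{OT}_\infty$ to two-body problems with an $N$-representability (resp.\ infinite-representability) constraint, note the nestedness of these constraints, and then split into the bounded case and the l.s.c.\ case. Case (i) is word-for-word the paper's argument --- you independently reconstruct the Diaconis--Freedman empirical-measure resampling, the $\tfrac1N$ total-variation bound, and the resulting $O(\|c\|_\infty/N)$ error. Case (ii) likewise matches the paper in outline (Prokhorov, weak closedness of the set of $M$-representable measures, lower semicontinuity of $\nu\mapsto\int c\,d\nu$). The one genuine divergence is how you upgrade ``$N$-representable for every $N$'' to ``infinitely representable'' (your $\bigcap_m\calR_m(\mu)=\calR_\infty(\mu)$, the paper's Lemma~\ref{infrepNrep}). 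You go through a Kolmogorov-extension/projective-limit argument on the nonempty weakly compact fibres $S_m=\{\sigma\in\calP^m_{sym}:\sigma\mapsto\mu,\ \sigma^{(2)}=\eta\}$; the paper instead reuses the same Diaconis--Freedman bound it already needs in case (i): given a chain of $N$-body representatives $\gamma_N$ of $\mu_2$, the infinitely representable $\mathbb{P}_{2,\gamma_N}$ converge to $\mu_2$ in total variation at rate $1/N$, and one concludes by weak closedness of $\calP^\infty_{sym}(\R^d)$. Both routes work; the paper's is shorter and quantitative (it recycles a single proposition), whereas your projective-limit argument is more self-contained in the sense that it does not lean on the $1/N$ estimate --- but it requires some care to make the inverse system precise (your ``stabilized'' $T_m$, or alternatively the standard fact that an inverse limit of nonempty compact Hausdorff spaces is nonempty without any surjectivity hypothesis). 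Either way, no gap.

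One minor polish for case (i): your chain gives $F^{OT}_\infty[\mu]\le L$ directly, which together with the monotonicity $F^{OT}_N\le F^{OT}_\infty$ you established at the start yields equality; the paper phrases the same argument as a lower bound $F^{OT}_N\ge F^{OT}_\infty - \|c\|_\infty/N$, but the content is identical.
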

Note that here not just costs leading to independence as in Theorem \ref{definetti1} but also costs leading to strong correlations as considered in \cite{Pass12a, Pass12b} are included.

The proof of Theorem \ref{convergence} is based on a construction from advanced probability theory \cite{DF80} which does not appear to be easily accessible to non-probabilists, and which contains the important insight that any $N$-body measure $\gamma_N\in\calP^N_{sym}(\R^{d})$ can be approximated by the $N$-body marginal $\tilde{\gamma}_N$ of an infinite probability measure $\gamma\in\calP_{sym}^\infty(\R^d)$ ($\tilde{\gamma}_N$ is infinitely representable in the terminology developed below). This allows us to approximate the $N$-body OT problem (\ref{Nbodyproblem}) as arising in density functional theory by the corresponding infinite-body OT problem (\ref{FtildeOTgen}). Interestingly, the focus of probabilists was precisely the other way around: the object of primary interest were the infinite probablity measures in the space $\calP_{sym}^\infty$, or in fact the underlying infinite sequences of random variables. The latter serve as useful alternatives to iid (identically and independently distributed) sequences which allow to model repeated sampling experiments containing correlations; approximation by finite sequences of random variables was then of interest for purposes of numerical sampling. 
\\[2mm]
Finally let us describe what our results imply for the SCE functional (\ref{FOTN}), (\ref{CN}) arising in density functional theory. Roughly, they allow to analyze a natural {\it inhomogeneous high-density limit} in which the inhomogeneity is not a small perturbation, but stays proportional to the overall density. More precisely, one fixes an arbitrary density $\mu$ of integral $1$, considers the $N$-body system with proportional inhomogeneity, i.e. with one-body density given by $\rho=N\mu$,
and studies the asymptotics of the SCE energy as $N$ gets large. Note that the SCE energy corresponds, up to normalization factors, to the optimal cost functional (\ref{finOT0}) with Coulomb cost $c(x,y)=1/|x-y|$ in dimension $d=3$:
\begin{equation} \label{Veescaled}
     V_{ee}^{SCE}[\rho] = {N \choose 2} F_N^{OT}[\frac{\rho}{N}].
\end{equation}
Combining Theorem \ref{definetti1} and Theorem \ref{convergence} immediately yields:
\begin{corollary} \label{IHDL} (Inhomogeneous high-density limit of the SCE functional) Let $\mu\, : \, \R^3\to\R$ be any nonnegative function with $\int_{\R^3}\mu=1$ which belongs to $L^1(\R^3)\cap L^3(\R^3)$. Let $\rho^{(N)}=N\mu$. Then as $N$ gets large, the SCE energy of $\rho^{(N)}$ is
asymptotic to the mean field energy, that is to say
$$
    \lim_{N\to\infty} \frac{V_{ee}^{SCE}[\rho^{(N)}]}{J[\rho^{(N)}]} = 1,
$$  
where $J$ is the functional (\ref{J}).  
\end{corollary}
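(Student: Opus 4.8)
The plan is to combine Theorem~\ref{definetti1} and Theorem~\ref{convergence} with the normalization identity~(\ref{Veescaled}); no genuinely new argument is required. First I would fix a nonnegative $\mu\in L^1(\R^3)\cap L^3(\R^3)$ with $\int_{\R^3}\mu=1$, set $\rho^{(N)}=N\mu$, and abbreviate $I:=\int_{\R^6}\frac{1}{|x_1-x_2|}\mu(x_1)\mu(x_2)\,dx_1\,dx_2$. The next step is to check the hypotheses of both theorems for the Coulomb cost $c(x,y)=1/|x-y|$ in dimension $d=3$ and this $\mu$. By the elementary $L^p$ estimate quoted after Theorem~\ref{definetti1} (eq.~(5.21) of \cite{CFK11}), the finiteness condition~(\ref{finitenesscdn}) holds, so $I<\infty$; moreover $I>0$ because $\mu$ is a genuine probability density and the Coulomb kernel is strictly positive. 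Hence Theorem~\ref{definetti1}(a)(ii) applies and gives $F^{OT}_\infty[\mu]=I$.

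For Theorem~\ref{convergence} one takes $\gamma_0=\mu^{\otimes\infty}\in\calP^\infty_{sym}(\R^3)$, which has one-body marginal $\mu$ and satisfies $\int_{(\R^3)^\infty}c(x_1,x_2)\,d\gamma_0=I<\infty$, while $1/|x-y|$ is lower semi-continuous with values in $[0,\infty]$, which is hypothesis~(ii). Thus Theorem~\ref{convergence} yields $F^{OT}_\infty[\mu]=\lim_{N\to\infty}F^{OT}_N[\mu]$, and combining this with the previous identity gives $\lim_{N\to\infty}F^{OT}_N[\mu]=I$.

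It then remains to unwind the normalizations. By~(\ref{Veescaled}) we have $V_{ee}^{SCE}[\rho^{(N)}]={N\choose 2}\,F^{OT}_N[\mu]$, and by the definition~(\ref{J}) together with bilinearity $J[\rho^{(N)}]=\tfrac{N^2}{2}\,I$. Therefore
\[
   \frac{V_{ee}^{SCE}[\rho^{(N)}]}{J[\rho^{(N)}]}
   = \frac{{N\choose 2}\,F^{OT}_N[\mu]}{\tfrac{N^2}{2}\,I}
   = \frac{N-1}{N}\cdot\frac{F^{OT}_N[\mu]}{I},
\]
which is well defined since $0<I<\infty$. Letting $N\to\infty$, the first factor tends to $1$ and, by the previous paragraph, the second tends to $I/I=1$, which proves the corollary. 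Since all of the substance is already contained in Theorems~\ref{definetti1} and~\ref{convergence}, there is no real obstacle here; the only points meriting care are verifying that $\mu\in L^1\cap L^3$ genuinely meets the finiteness and integrability hypotheses of both theorems (handled by the cited $L^p$ estimate) and that $I>0$, so that the ratio is meaningful.
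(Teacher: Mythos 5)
Your proposal is correct and follows exactly the route the paper intends: the paper gives no separate proof but simply states that the corollary follows by combining Theorems~\ref{definetti1} and~\ref{convergence} with the scaling identity~(\ref{Veescaled}). You have merely spelled out the hypothesis-checking and the arithmetic (including the useful observation that $0<I<\infty$ so the ratio is meaningful), so there is nothing to flag.
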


We remark that both numerator and denominator are of order $N^2$ as $N\to\infty$, i.e. they are proportional to the number of particle pairs in the system. A very interesting question raised by our work is to determine asymptotic corrections to the mean field energy in eq. (\ref{limitOT}). For non-singular costs, we expect the next-order correction to
occur at the thermodynamic order $O(N)$. Unfortunately, understanding these corrections lies beyond the scope of the methods developed here.
\begin{remark}\label{lieboxford}
A very interesting alternative proof of the preceding corollary for the Coulombic cost function was pointed out to us by Paola Gori-Giorgi. This proof, and hence also the above corollary, is implicit in recent work in the physics literature \cite{RSG11}. The key ingredient is a nontrivial Coulombic inequality, the Lieb-Oxford bound \cite{LO81},  
The argument is as follows: the Lieb-Oxford bound, in our notation, states that
\begin{equation*}
V_{ee}^{SCE}[\rho^{(N)}]-J[\rho^{(N)}] \geq -C \int_{\mathbb{R}^3} (\rho^{(N)})^{4/3},
\end{equation*}
for some constant $C$ independent of $N$. (Strictly speaking, the bound was only formulated and derived in \cite{LO81} for $N$-point densities which arise from some wavefunction, but the proof generalizes easily to probability measures.) Noting that the left hand side is non positive (by using the independent $N$-point density as trial function in the
variational principle for $V_{ee}^{SCE}$), and that $V_{ee}^{SCE}[\rho^{(N)}]$ and $J[\rho^{(N)}]$ scale like $N^2$ while $ \int_{\mathbb{R}^3} (\rho^{(N)})^{4/3} = N^{4/3}\int_{\mathbb{R}^3} \mu^{4/3}$ scales as $N^{4/3}$, we divide by $J[\rho^{(N)}]$ and let $N$ tend to $\infty$ to obtain the desired result.

The arguments developed in the present paper apply to a larger class of interaction energies (see Theorems 1.1, 1.2), and - perhaps more importantly - are based on a general and transparent probabilistic inequality (namely the comparison estimate in Proposition 3.2 below between infinitely representable and finitely representable measures which goes back to Diaconis and Freedman). But -- unlike the Lieb-Oxford inequality -- our arguments fail to give a quantitative error bound for the associated optimal cost functionals for singular costs like the Coulomb cost, yielding such bounds only in the case of bounded costs (see eq. (\ref{similartolieboxford})).   
\end{remark}
{\bf Plan of paper.} 
The rest of the paper is organized as follows. In section \ref{0} we recall the notion of $N$-representability of pair measures, which was developed in the present OT context in our recent paper \cite{CFKMP13} and is equivalent to the concept of $N$-extendability of pairs of random variables in probability theory, and prove Theorem \ref{definetti1}. Section \ref{4} is devoted to the proof of Theorem \ref{convergence}.

\section{Solution to the infinite-body OT problem}
\label{0}
The proof of Theorem \ref{definetti1} will require two key Lemmas.
The first one (Lemma \ref{reduction}) reduces the infinite-body OT problem (\ref{FtildeOTgen}) to a $2$-body OT problem with an infinite representability constraint. The second (Lemma \ref{reduction1}) gives an explicit description of the
measures satisfying this infinite representability constraint (de Finetti's Theorem, stated in Proposition \ref{definetti} below). 

In subsection \ref{1} we recall the notion of $N$-representability of a pair density, generalize it
to infinitely many particles, prove Lemmas \ref{reduction} and \ref{reduction1}, and
also establish existence of at least one solution to (\ref{FtildeOTgen})} (Proposition \ref{existinfcase}).
In subsection \ref{2} we establish Theorem \ref{definetti1}, via Fourier transform calculus applied to the de Finetti representation of infinitely representable measures. 

\subsection{Reduction to a $2$-body OT problem with infinite representability constraint}
\label{1}

We now reformulate the infinite-body mass transportation problem
(\ref{FtildeOTgen}) as a standard (two-body) mass transportation problem subject to an infinite representability
constraint. This reformulation is possible due to the fact that the cost in (\ref{generalcost}) is a sum of symmetric pair terms. We begin by recalling the definition of $N$-representability, introduced in the present context in our recent paper \cite{CFKMP13} (see Definition III.1). 
\begin{definition}(N-representability) \label{Def1} Let $N\ge 2$. A symmetric probability measure $\mu_2\in
\calP_{sym}(\R^{2d})$ is said to be \emph{$N$-representable} if there exists a symmetric probability measure $\gamma_N\in\calP_{sym}^N(\R^{d})$ such that for all Borel sets $A_i,A_j\subseteq\R^d$ and all $1\le i<j\le N$, we have
\be \label{densityNrep}
    \gamma_N(\R^{d(i-1)}\times A_i \times \R^{d(j-(i-1))}\times A_j\times \R^{d(N-(j-1))}) = 
    \mu_2(A_i\times A_j). 
\ee
\end{definition}

$N$-representability is a highly nontrivial restriction. The following basic example is taken from \cite{CFKMP13}. 
\\[2mm]
{\bf Example} Let $A$, $B\in\R^d$, $A\neq B$. The totally anticorrelated probability measure 
$\mu_2 = \frac12(\delta_A\otimes\delta_B + \delta_B\otimes\delta_A)$ is not 3-representable. (Here $\delta_A$ denotes the Dirac measure centred at $A$.) 
\\[2mm]
Intuitively, this is because we can not allocate 3 particles to 2 sites without doubly occupying one of the sites. Mathematically, to prove this suppose that $\gamma$ was any probability measure on $(\R^d)^3$ with two-body marginal $\mu_2$. Then $\gamma$ must have one-body marginal supported on $\{A,B\}$, and hence must be a convex combination of the measures $\delta_X\otimes\delta_Y\otimes\delta_Z$ with $X,Y,Z\in\{A,B\}$. But the two-point marginal of each of the latter measures contains a positive multiple of either $\delta_A\otimes\delta_A$ or $\delta_B\otimes\delta_B$, whence the two-pont marginal of $\gamma$ cannot equal $\mu_2$. For further discussion and more general examples we refer to \cite{CFKMP13}. 
\\[2mm]
Two quantum analogues of $N$-representability are widely studied in the physics and quantum chemistry literature. The first one, (wavefunction) representability of a pair density, is closely related
to the notion above and asks whether a symmetric nonnegative function $p_2\, : \, \R^{2d}\to\R$ of unit integral satisfies
$$
  p_2(x_1,x_2)=\sum_{s_1,..,s_N\in\Z_2} \int_{\R^{d(N-2)}}|\Psi(x_1,s_1,x_2,s_2,...,x_N,s_N)|^2
$$
for some square-integrable antisymmetric normalized $N$-electron wavefunction $\Psi\, : \, (\R^d\times\Z_2)^N\to\C$. Wavefunction representability trivially implies representability in the sense of the definition above. Conversely, many known necessary conditions on representability by an $N$-electron wavefunction, such as the Davidson \cite{Da95} and generalized Davidson \cite{AD06} constraints, continue to hold for pair densities which are $N$-representable in the sense of Definition \ref{Def1}, as their derivation in fact only uses representability by a symmetric probability measure. 

In the second quantum analogue, one asks whether a function $\Gamma\, : \, (\R^d\times\Z_q)^4 \to \C$ is of the form
\be \label{quantrep}
   \Gamma(z_1,z_2; z_1',z_2') = \int_{\R^{(N-2)d}} \Psi(z_1,z_2,z_3,..,z_N)\overline{\Psi(z_1',z_2',z_3,..,z_N)} \, dz_3 ... dz_N
\ee
for some antisymmetric function $\Psi\in L^2((\R^d\times \Z_q)^N)$ of unit norm, with the case of electrons corresponding to $d=3$, $q=2$. Mathematically, $\Gamma$ should be viewed as a unit-trace operator $\hat{\Gamma}$ on the two-body Hilbert space $L^2((\R^d\times\Z_q)^2)$, acting as 
$$
   \varphi \mapsto (\hat{\Gamma}\varphi)(z_1,z_2) = \int_{(\R^d\times\Z_q)^2} \Gamma(z_1,z_2;z_1',z_2') \varphi(z_1',z_2')\, dz_1' dz_2'.
$$
Eq. (\ref{quantrep}) means that $\hat{\Gamma}$ can be represented as a partial trace of the unit-trace
operator $|\Psi\rangle\langle\Psi|$ on the $N$-body Hilbert space $L^2((\R^d\times\Z_q)^N)$. For an overview of results on the quantum representability problem we refer to \cite{COLYUK00}. 

The notion of $N$-representability in Definition \ref{Def1} is well known in the probability theory literature, under the names {\it N-extendability} or {\it finite exchangeability}, and is usually stated and analyzed in the language of sequences $X_1,..,X_N$ of $N$ random variables. The formulation in Definition \ref{Def1} is mathematically equivalent and corresponds to considering instead the law of the
random vector $(X_1,..,X_N)$. Numerous attempts have been made to characterize $N$-extendability for $N\ge 3$ for various types of marginals (see, for example, \cite{Ald85} for an an in-depth overview of  $N$-extendability results in probability), but a direct characterization remains elusive.

Let us now generalize Definition \ref{Def1} to infinite particle systems. 

\begin{definition}(Infinite representability) Analogously to the $N$-representability case, a symmetric probability measure $\mu_2\in
\calP_{sym}(\R^{2d})$ is said to be \emph{infinitely representable} if there exists a symmetric probability measure
$\gamma_\infty\in\calP_{sym}^\infty(\R^d)$ such that for all Borel sets $A_i,A_j\subseteq\R^d$ and all $1\le i<j\le N$, we have
\be \label{densityrep}
    \gamma_\infty(\R^{d(i-1)}\times A_i \times \R^{d(j-(i-1))}\times A_j\times \R^d\times\ldots\times\R^d\times \ldots) = 
    \mu_2(A_i\times A_j).
\ee
\end{definition}

Note that a symmetric probability measure $\gamma_\infty\in\calP_{sym}^\infty(\R^d)$ is called an \emph{exchangeable} measure in the probabilistic literature.
It is easy to see (see, for example, \cite{Ald85} or Lemma III.2 in \cite{CFKMP13}) that

\begin{lemma}
\label{Nrepressmall}
Let $N\ge M \ge 2$. If $\mu_2\in\calP_{sym}(\R^{2d})$ is $N$-representable, then it is also $M$-representable.
\end{lemma}
That is to say, $N$-representability becomes a more and more stringent condition as $N$ increases. 

We will next reformulate the minimization problem (\ref{FtildeOTgen}) in {terms} of infinite representability. The result is a straightforward extension to infinite particle systems of Theorem III.3 in \cite{CFKMP13} for the $N$-body problem.  
\begin{lemma}
\label{reduction}
 For any $\mu\in\calP(\R^d)$ we have
 \begin{equation}
 \label{infFOT} 
 {F}^{OT}_{\infty}[\mu] = \inf\Bigl\{\int_{\R^{2d}} c(x,y) \, d\mu_2(x,y) \; 
   \Big| \; 
   \mu_2\in \calP_{sym}^2(\R^{d}), \, \mu_2\mapsto\mu, \, \mu_2 \mbox{ is infinitely representable} \Bigr\}.
\end{equation}
\end{lemma}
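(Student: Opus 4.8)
The plan is to prove equality of the two infima by showing each is bounded above by the other. The key structural fact is that the infinite-body cost $C_\infty[\gamma]$ depends on $\gamma\in\calP_{sym}^\infty(\R^d)$ only through its two-body marginal $\mu_2$. Indeed, since $\gamma$ is symmetric (exchangeable), for every pair $1\le i<j\le N$ the integral $\int c(x_i,x_j)\,d\gamma$ equals $\int_{\R^{2d}} c(x,y)\,d\mu_2(x,y)$; summing over the ${N\choose 2}$ pairs and dividing by ${N\choose 2}$ shows $\frac{1}{{N\choose 2}}C_N[\gamma] = \int_{\R^{2d}} c\,d\mu_2$ for \emph{every} finite $N$, so the limit defining $C_\infty[\gamma]$ exists trivially and equals $\int c\,d\mu_2$ as well. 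Moreover, the one-body marginal constraint $\gamma\mapsto\mu$ is equivalent to $\mu_2\mapsto\mu$ (both marginals of $\mu_2$ equal $\mu$, by symmetry), and $\mu_2$ is by definition infinitely representable as the two-body marginal of $\gamma$.

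First I would verify the ``$\ge$'' inequality: given any admissible $\gamma\in\calP_{sym}^\infty(\R^d)$ with $\gamma\mapsto\mu$, let $\mu_2$ be its two-body marginal. Then $\mu_2\in\calP_{sym}^2(\R^d)$, $\mu_2\mapsto\mu$, $\mu_2$ is infinitely representable (witnessed by $\gamma$ itself), and by the computation above $C_\infty[\gamma]=\int_{\R^{2d}}c\,d\mu_2$. Hence the right-hand infimum in (\ref{infFOT}) is $\le C_\infty[\gamma]$; taking the infimum over $\gamma$ gives $F^{OT}_\infty[\mu]\ge$ RHS. Conversely, for ``$\le$'': take any $\mu_2\in\calP_{sym}^2(\R^d)$ with $\mu_2\mapsto\mu$ that is infinitely representable, and let $\gamma\in\calP_{sym}^\infty(\R^d)$ be a witnessing measure with two-body marginal $\mu_2$. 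Its one-body marginal is then $\mu$, so $\gamma$ is admissible for (\ref{FtildeOTgen}), and again $C_\infty[\gamma]=\int_{\R^{2d}}c\,d\mu_2$. Therefore $F^{OT}_\infty[\mu]\le\int c\,d\mu_2$; taking the infimum over such $\mu_2$ yields $F^{OT}_\infty[\mu]\le$ RHS. Combining the two bounds gives (\ref{infFOT}).

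There is a small measurability/consistency point to address when passing from $\gamma$ to $\mu_2$ and back: one must check that the two-body marginal of an exchangeable $\gamma\in\calP_{sym}^\infty(\R^d)$ is a well-defined symmetric Borel probability measure on $\R^{2d}$ (immediate from the definition of the product $\sigma$-algebra and exchangeability), and that the witnessing measure in the definition of infinite representability lies in $\calP_{sym}^\infty(\R^d)$ so that it is genuinely admissible in (\ref{FtildeOTgen}) — this is exactly how ``infinitely representable'' was defined, so nothing extra is needed. The main (and essentially only) obstacle is purely bookkeeping: making sure the exchangeability of $\gamma$ is used correctly to reduce the $N$-fold sum $\sum_{i<j}\int c(x_i,x_j)\,d\gamma$ to ${N\choose 2}$ identical copies of $\int c\,d\mu_2$, so that the limit in the definition of $C_\infty$ is trivial. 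Once that identification is in place the equality of infima is immediate. This is, as the text says, a straightforward infinite-particle analogue of Theorem III.3 in \cite{CFKMP13}, and I do not expect any genuine difficulty beyond carefully transcribing the finite-$N$ argument.
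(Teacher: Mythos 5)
Your proof is correct and matches the paper's argument, which likewise rests on the single observation that $C_\infty[\gamma]=\int_{\R^{2d}}c\,d\mu_2$ where $\mu_2$ is the two-body marginal of $\gamma$, combined with the definition of infinite representability. You simply spell out the two-sided inequality that the paper's one-line proof leaves implicit.
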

\begin{proof}
This is clear from the observation that ${C}_{\infty}[\gamma] = \int_{\Omega^2}cd\mu_2$ for any $\gamma \in \calP_{sym}^\infty(\R^d)$  with $\gamma \rightarrow \mu_2$ and definition of infinite representability.
\end{proof}
To prove our next result, we will use the de Finetti-Hewitt-Savage Theorem for infinitely representable measures as stated and proved, in the different but equivalent language of exchangeable sequences of random variables, e.g. in Theorems 14 and 20 from \cite{DF80}:
\begin{proposition} (De Finetti-Hewitt-Savage Theorem)
\label{definetti}
Let $S$ be $\R^d$, or more generally any Polish space and ${\cal B}$ is the Borel $\sigma$-field.
Let ${\cal P}(S)$ be the set of probability measures on $(S, \cal{B})$, and let ${\cal{B}}^*(S)$ be the Borel $\sigma$-field in ${\cal P}(S)$. Let $\gamma_\infty$ be a symmetric Borel measure on the Borel $\sigma$-field ${\cal{B}}^\infty(S)$ of the product $S^\infty$  (for more precise definitions of these sets, see \cite{DF80}). Then there exists a unique Borel probability measure $\nu$ on ${\cal{B}}^*(S)$ such that
\begin{equation}
\label{definettieqn}
\gamma_\infty=\int_{ {\cal P}(S)} Q^{\otimes \infty}d\nu(Q).
\end{equation}
\end{proposition}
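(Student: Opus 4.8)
The plan is to prove the De Finetti--Hewitt--Savage representation by following the classical route through finite exchangeable sequences and a weak-$*$ compactness argument, all phrased measure-theoretically on the Polish space $S$. The first reduction is to pass to a sufficiently friendly concrete realization: a Polish space is Borel-isomorphic to a Borel subset of $[0,1]$ (or of $\R$), and a symmetric Borel probability measure on $S^\infty$ pushes forward to a symmetric Borel probability measure on the corresponding subset of $[0,1]^\infty$; since the Borel isomorphism is bimeasurable, the representation (\ref{definettieqn}) for the image measure transfers back to $\gamma_\infty$. Thus it suffices to treat $S$ compact metric, where $\calP(S)$ is itself compact metrizable in the weak-$*$ topology and ${\cal B}^*(S)$ is its Borel $\sigma$-field; this is the setting in which weak-$*$ limits are available.

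Next I would produce the candidate mixing measure $\nu$ via empirical measures. Let $(X_1,X_2,\dots)$ be the coordinate process under $\gamma_\infty$ and set $L_n := \frac1n \sum_{i=1}^n \delta_{X_i} \in \calP(S)$, a $\calP(S)$-valued random variable. By exchangeability and a reverse-martingale (or Hewitt--Savage $0$--$1$ law) argument, $L_n$ converges $\gamma_\infty$-a.s. in the weak-$*$ topology to a random measure $L_\infty$; define $\nu := (L_\infty)_\sharp \gamma_\infty$ on ${\cal B}^*(S)$. The core claim is then that, conditionally on the exchangeable $\sigma$-field (equivalently, conditionally on $L_\infty$), the $X_i$ are i.i.d. with law $L_\infty$; integrating out gives $\gamma_\infty = \int_{\calP(S)} Q^{\otimes\infty}\, d\nu(Q)$. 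To establish the conditional-i.i.d. statement it is cleanest to test against finite products of bounded continuous functions $f_1,\dots,f_k$: one shows $\ey_{\gamma_\infty}\big[\prod_{j=1}^k f_j(X_j) \,\big|\, \calE\big] = \prod_{j=1}^k \langle L_\infty, f_j\rangle$. This is where the finite-exchangeability combinatorics enters: for fixed $k$, approximate $\prod_j f_j(X_j)$ by an average over injections $\iota:\{1,\dots,k\}\hookrightarrow\{1,\dots,n\}$ of $\prod_j f_j(X_{\iota(j)})$ (legitimate by exchangeability), then observe that this average differs from the genuine product $\prod_j \frac1n\sum_{i=1}^n f_j(X_i) = \prod_j \langle L_n,f_j\rangle$ only by "diagonal" terms of order $O(k^2/n)$, and let $n\to\infty$ using $L_n \to L_\infty$ weak-$*$. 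Taking conditional expectations and using that $L_n$ is $\calE$-measurable in the limit yields the identity; since finite products of $C_b$ functions are measure-determining on $S^\infty$, this pins down the conditional law as $L_\infty^{\otimes\infty}$, hence (\ref{definettieqn}).

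Finally, uniqueness of $\nu$: if $\gamma_\infty = \int Q^{\otimes\infty} d\nu(Q) = \int Q^{\otimes\infty} d\nu'(Q)$, then under either representation the empirical measures $L_n$ converge a.s. to a limit whose law is the given mixing measure (by the strong law of large numbers applied under $Q^{\otimes\infty}$ for $\nu$-a.e.\ $Q$, $L_n\to Q$), so both $\nu$ and $\nu'$ equal the law of $L_\infty$ under $\gamma_\infty$ and therefore coincide. I expect the main obstacle to be the measure-theoretic bookkeeping around the exchangeable $\sigma$-field and the a.s.\ weak-$*$ convergence of $L_n$ on a general Polish space --- in particular verifying that $L_\infty$ is genuinely $\calP(S)$-valued and ${\cal B}^*$-measurable, and that conditional expectations given $\calE$ behave well enough to push the $O(k^2/n)$ estimate through a limit; the compact-metric reduction in the first paragraph is precisely what tames this, since there one has a countable convergence-determining family of test functions and a compact (hence separable metrizable) space of measures. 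Since the excerpt explicitly permits invoking Theorems 14 and 20 of \cite{DF80}, an acceptable shortcut is to cite those directly after the Polish-to-compact reduction, but the self-contained empirical-measure argument above is the proof I would write out.
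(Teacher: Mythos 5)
Your sketch is a correct outline of the classical empirical-measure proof of the De Finetti--Hewitt--Savage theorem: reduce to a compact metric $S$ via a Borel isomorphism, obtain the random limiting measure $L_\infty$ of the empiricals $L_n$ by reverse-martingale convergence (or the Hewitt--Savage $0$--$1$ law), set $\nu := (L_\infty)_\sharp\gamma_\infty$, verify conditional i.i.d.-ness given the exchangeable $\sigma$-field by testing on products of $C_b$ functions with the $O(k^2/n)$ diagonal correction, and recover uniqueness from the strong law of large numbers applied under $Q^{\otimes\infty}$. The key thing to note, however, is that the paper itself does not prove Proposition \ref{definetti}: the sentence immediately preceding it defers explicitly to Theorems 14 and 20 of \cite{DF80}, and no proof follows the statement. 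So there is no step-by-step comparison to make --- the paper's treatment is a citation, and your closing remark that citing \cite{DF80} is an acceptable shortcut is in fact exactly what the authors do. It is worth pointing out that the $O(k^2/n)$ injections-versus-products estimate at the core of your conditional-i.i.d.\ step is precisely the Diaconis--Freedman $k(k-1)/N$ estimate (Theorem 13 of \cite{DF80}) that the paper \emph{does} invoke later, in Section \ref{4}, to prove Theorem \ref{convergence}; you deploy it inside an $n\to\infty$ limit to identify the conditional law, while the paper uses its finite-$N$ quantitative form to compare $N$-representable and infinitely representable measures. The caveats you flag --- ${\cal B}^*$-measurability of $L_\infty$, and ensuring $L_\infty$ genuinely takes values in $\calP(S)$ rather than in $\calP$ of a compactification --- are the real technical points, and the Borel-isomorphism-plus-compactification device is the standard way to handle them; just be careful that $\calP(S)$ is Polish but not compact when $S$ is only Polish, so you must check a posteriori that the weak-$*$ limit charges no mass outside the copy of $S$. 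In short, your argument is a valid self-contained expansion of a step the paper deliberately leaves as a reference.
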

In words: one can view an infinitely representable probability measure as an integral of product probability measures against a probability measure defined on the space of probability measures. 

Next we reformulate the optimal cost functional (\ref{FOTinfty}) with the help of de Finetti's theorem.
\begin{theorem}
\label{reduction1}
For any $\mu\in\calP(\R^d)$, the functional $F_\infty^{OT}[\mu]$ introduced in (\ref{FOTinfty}) satisfies
\begin{equation}
 \label{inf1FOT} 
 {F}^{OT}_{\infty}[\mu] = \inf\Bigl\{\int_{\R^{2d}} c(x,y) \, d\mu_2(x,y) \; 
   \Big| \; \mu_2=\int_{\calP(\R^d)}Q\otimes Q \, d\nu(Q) ~\mbox{ and } \mu = \int_{\calP(\R^d)}Q \, d\nu(Q) ~\mbox{ for some}~{\nu\in\cal P}({\cal P }(\R^d))\Bigr\}.
\end{equation}
Moreover $\gamma=\int_{\calP(\R^d)}Q^{\otimes\infty} d\nu(Q)$ is a minimizer of the problem (\ref{FtildeOTgen}) if and only if $\mu_2=\int_{\calP(\R^d)} Q\otimes Q \, d\nu(Q)$ is a minimizer of the problem in (\ref{inf1FOT}).  
\end{theorem}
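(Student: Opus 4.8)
The strategy is to combine Lemma \ref{reduction}, which already expresses $F^{OT}_\infty[\mu]$ as an infimum over infinitely representable pair measures $\mu_2$ with one-body marginal $\mu$, with the de Finetti–Hewitt–Savage representation (Proposition \ref{definetti}) that describes \emph{exactly} which pair measures are infinitely representable. Concretely, I would show that a symmetric $\mu_2 \in \calP_{sym}^2(\R^d)$ with $\mu_2 \mapsto \mu$ is infinitely representable if and only if $\mu_2 = \int_{\calP(\R^d)} Q\otimes Q\, d\nu(Q)$ and $\mu = \int_{\calP(\R^d)} Q\, d\nu(Q)$ for some $\nu \in \calP(\calP(\R^d))$; substituting this characterization into (\ref{infFOT}) immediately yields (\ref{inf1FOT}).

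First I would prove the ``only if'' direction: suppose $\mu_2$ is infinitely representable, so there is $\gamma_\infty \in \calP_{sym}^\infty(\R^d)$ with two-body marginal $\mu_2$ and one-body marginal $\mu$. By Proposition \ref{definetti} there is a unique $\nu \in \calP(\calP(\R^d))$ with $\gamma_\infty = \int Q^{\otimes\infty}\, d\nu(Q)$. Taking the one-body marginal on both sides (which commutes with the integral over $\nu$, by Fubini/monotone class arguments applied to indicators of Borel sets) gives $\mu = \int Q\, d\nu(Q)$, and taking the two-body marginal — using that the two-body marginal of $Q^{\otimes\infty}$ is exactly $Q\otimes Q$ — gives $\mu_2 = \int Q\otimes Q\, d\nu(Q)$. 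Conversely, given such a $\nu$, the measure $\gamma_\infty := \int Q^{\otimes\infty}\, d\nu(Q)$ is a well-defined element of $\calP_{sym}^\infty(\R^d)$ (each $Q^{\otimes\infty}$ is symmetric, hence so is the mixture), its one-body marginal is $\mu$ and its two-body marginal is $\mu_2$, so $\mu_2$ is infinitely representable by definition. This proves the equality of the two infima. For the ``moreover'' clause, I would note that the map $\gamma_\infty = \int Q^{\otimes\infty}\, d\nu(Q) \mapsto \mu_2 = \int Q\otimes Q\, d\nu(Q)$ is exactly the two-body marginal map, and by Lemma \ref{reduction} (precisely, the identity $C_\infty[\gamma] = \int_{\R^{2d}} c\, d\mu_2$ whenever $\gamma$ has two-body marginal $\mu_2$) the value of the objective $C_\infty$ at $\gamma$ equals the value of the reduced objective at $\mu_2$; since the correspondence $\nu \leftrightarrow \gamma_\infty$ is a bijection (uniqueness part of Proposition \ref{definetti}) and every competitor $\gamma$ for (\ref{FtildeOTgen}) with $\gamma\mapsto\mu$ arises this way, $\gamma$ minimizes (\ref{FtildeOTgen}) iff its two-body marginal $\mu_2$ minimizes (\ref{inf1FOT}).

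The main technical point to handle carefully — and the step I expect to require the most care — is the interchange of the two-body (and one-body) marginalization with the integral $\int_{\calP(\R^d)} (\cdot)\, d\nu(Q)$, i.e.\ justifying that the marginal of $\int Q^{\otimes\infty} d\nu(Q)$ is $\int (\text{marginal of } Q^{\otimes\infty})\, d\nu(Q)$. This is a Fubini-type statement on the measurable space $\calP(\R^d)$ equipped with its Borel $\sigma$-field $\calB^*(\R^d)$; it is routine once one checks that $Q \mapsto Q^{\otimes\infty}(B)$ is $\calB^*$-measurable for each cylinder $B$, which follows from measurability of $Q \mapsto Q(A)$ for Borel $A\subseteq\R^d$ and a monotone class argument. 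A secondary subtlety is that (\ref{inf1FOT}) as written does not explicitly impose $\mu = \int Q\, d\nu(Q)$ being the \emph{given} $\mu$ — but it does, via the second stated condition — and one should make sure the symmetry requirement ``$\mu_2 \in \calP_{sym}^2$'' is automatic from $\mu_2 = \int Q\otimes Q\, d\nu(Q)$, which it is since each $Q\otimes Q$ is symmetric.
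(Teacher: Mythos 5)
Your proposal is correct and follows essentially the same route as the paper: the paper's proof also reads off the one- and two-body marginals of the de~Finetti representation $\gamma_\infty = \int Q^{\otimes\infty}\,d\nu(Q)$ to characterize infinitely representable pair measures as exactly the mixtures $\int Q\otimes Q\,d\nu(Q)$, and then invokes Lemma~\ref{reduction}. The only difference is that you spell out the ``if and only if'' direction and the Fubini-type interchange of marginalization with $\int(\cdot)\,d\nu$ explicitly, whereas the paper leaves these as immediate; these details are sound and would only strengthen the exposition.
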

\begin{proof}
Note that the one and two body marginals of $\gamma_{\infty}$ in \eqref{definettieqn} are given by $\mu = \int_{\calP(\R^d)} Q \, d\nu(Q)$ and $\mu_2 = \int_{\calP(\R^d)}Q\otimes Q \, d\nu(Q)$, respecitvely. Then, by de Finetti's Theorem, $\mu_2$ is infinitely representable if and only if $\mu_2 = \int_{\calP(\R^d)}Q\otimes Q \, d\nu(Q)$ for some $\nu \in  {\cal P}({\cal P}(\R ^d))$.  The result follows from Lemma \ref{reduction}.
\end{proof}
We end this subsection with a general result of existence of at least one solution to (\ref{FtildeOTgen}) and to (\ref{infFOT}). This result will be used in the proof of Theorem \ref{convergence}.
\begin{theorem}
\label{existinfcase}
For all $N\in\N, N\ge 2$, let $c_N\, : \, (\R^d)^N \rightarrow {\R}_+\cup\{\infty\}$  be defined as in (\ref{generalcost}), with $c$ Borel-measurable, symmetric, and lower semi-continuous.
Then there exists at least one solution $\gamma^{opt}$ to (\ref{FtildeOTgen}) and at least one solution $\mu_2^{opt}$ to the minimization problem in (\ref{infFOT}).
\end{theorem}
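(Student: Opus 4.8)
\emph{Proof idea.} The plan is to run the direct method of the calculus of variations directly on the space $\calP^\infty_{sym}(\R^d)$ of infinite-dimensional symmetric measures, using the observation already made in the proof of Lemma~\ref{reduction}: for a symmetric $\gamma\in\calP^\infty_{sym}(\R^d)$ with two-body marginal $\mu_2$ one has $\tfrac{1}{\binom{N}{2}}C_N[\gamma]=\int_{\R^{2d}}c\,d\mu_2$ for \emph{every} $N\ge 2$, so that $C_\infty[\gamma]=\int_{\R^{2d}}c\,d\mu_2$ with no genuine limit being taken. Since $\mu^{\otimes\infty}\in\calP^\infty_{sym}(\R^d)$ is always admissible in~(\ref{FtildeOTgen}), the infimum is over a nonempty set; if $F^{OT}_\infty[\mu]=+\infty$ then $\mu^{\otimes\infty}$ is itself a minimizer and $\mu\otimes\mu$ solves~(\ref{infFOT}) via Lemma~\ref{reduction}, so we may assume $F^{OT}_\infty[\mu]<\infty$.

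First I would pick a minimizing sequence $\gamma^{(n)}\in\calP^\infty_{sym}(\R^d)$ with $\gamma^{(n)}\mapsto\mu$ and $C_\infty[\gamma^{(n)}]\to F^{OT}_\infty[\mu]$. The next step is compactness: because every one-coordinate marginal of every $\gamma^{(n)}$ is the fixed measure $\mu$, the family $\{\gamma^{(n)}\}$ is tight in $\calP\big((\R^d)^\infty\big)$. Indeed, given $\varepsilon>0$ choose compacts $K_i\subset\R^d$ with $\mu(\R^d\setminus K_i)<\varepsilon\,2^{-i}$; then $\prod_{i\ge 1}K_i$ is compact in the Polish space $(\R^d)^\infty$ and $\gamma^{(n)}\big((\R^d)^\infty\setminus\prod_i K_i\big)\le\sum_i\mu(\R^d\setminus K_i)<\varepsilon$ for all $n$. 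By Prokhorov's theorem, along a subsequence $\gamma^{(n)}\rightharpoonup\gamma^{opt}$ narrowly. Since finite coordinate permutations and coordinate projections of $(\R^d)^\infty$ are continuous, pushforward under them is narrowly continuous, so $\gamma^{opt}$ is again symmetric and has one-body marginal $\mu$; thus $\gamma^{opt}$ is admissible in~(\ref{FtildeOTgen}).

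It then remains to check optimality. Writing $\mu_2^{(n)}$ and $\mu_2$ for the two-body marginals of $\gamma^{(n)}$ and $\gamma^{opt}$, the convergence $\gamma^{(n)}\rightharpoonup\gamma^{opt}$ pushes forward under the projection $(\R^d)^\infty\to\R^{2d}$ to give $\mu_2^{(n)}\rightharpoonup\mu_2$ narrowly in $\calP(\R^{2d})$. By the standard lower semicontinuity of integrals of nonnegative lower semicontinuous integrands under narrow convergence (approximate $c$ from below by an increasing sequence of bounded Lipschitz functions $c_k\uparrow c$ and use monotone convergence), $\int c\,d\mu_2\le\liminf_n\int c\,d\mu_2^{(n)}$. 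Combining this with $C_\infty[\gamma]=\int c\,d\mu_2$ from the first paragraph gives $C_\infty[\gamma^{opt}]\le\liminf_n C_\infty[\gamma^{(n)}]=F^{OT}_\infty[\mu]$, while the reverse inequality holds because $\gamma^{opt}$ is admissible; hence $\gamma^{opt}$ minimizes~(\ref{FtildeOTgen}). Finally $\mu_2$, being the two-body marginal of $\gamma^{opt}\in\calP^\infty_{sym}(\R^d)$, is infinitely representable, has one-body marginal $\mu$, and satisfies $\int c\,d\mu_2=F^{OT}_\infty[\mu]$, so by Lemma~\ref{reduction} it is a minimizer $\mu_2^{opt}$ of~(\ref{infFOT}).

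The step most deserving of care is the tightness argument on the infinite product $(\R^d)^\infty$; it works precisely because every one-coordinate marginal is pinned to the single tight measure $\mu$, which lets one use a product of compact sets. A close second is the lower semicontinuity of $\mu_2\mapsto\int c\,d\mu_2$ for a possibly unbounded, merely lower semicontinuous $c$ — handled by the bounded-Lipschitz approximation above. An alternative route, closer to the spirit of the present section, is to run the direct method on the de Finetti parameter instead: write a minimizing sequence as $\mu_2^{(n)}=\int_{\calP(\R^d)}Q\otimes Q\,d\nu^{(n)}(Q)$ with $\int_{\calP(\R^d)}Q\,d\nu^{(n)}(Q)=\mu$ (Proposition~\ref{definetti}), note that $\{\nu^{(n)}\}$ is tight in $\calP(\calP(\R^d))$ since the barycenter $\mu$ is fixed and tight, extract a narrow limit $\nu$, and pass to the limit using Theorem~\ref{reduction1}; this requires the slightly more delicate tightness statement in $\calP(\calP(\R^d))$ but otherwise follows the same pattern.
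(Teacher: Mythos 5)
Your proposal is correct and follows essentially the same route as the paper's proof: the direct method on $\calP^\infty_{sym}(\R^d)$, with tightness obtained from the fixed one-body marginal via a countable product of compacts and Tychonoff, Prokhorov to extract a weak limit, closedness of $\calP^\infty_{sym}(\R^d)$ under weak convergence, lower semicontinuity of $\gamma\mapsto C_\infty[\gamma]=\int c\,d\mu_2$, and finally projecting the optimizer onto its two-body marginal to solve~(\ref{infFOT}). You are somewhat more explicit than the paper about the preservation of symmetry and the marginal constraint under narrow limits, the bounded-Lipschitz approximation used for lower semicontinuity, and the degenerate case $F^{OT}_\infty[\mu]=+\infty$, but the argument is the same in substance.
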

\begin{proof}
To prove the existence of a solution $\gamma\in\calP_{sym}^\infty(\R^d),\gamma\mapsto\mu$, to (\ref{FtildeOTgen}), we will adapt to our infinite-body optimal transportation problem the standard proof
of existence of solutions to two-body OT problems as given e.g. in \cite{Vill09}, Theorem 4.1.
Since there are some subtle differences to the proof in \cite{Vill09}, we will outline below the basic steps. 

The proof relies on basic variational arguments involving the topology of weak convergence
(imposed by bounded continuous test functions). There are two key properties on which the proof relies:
\begin{itemize}
\item [(a)] Lower semicontinuity of the cost functional $\gamma\mapsto C_\infty[\gamma]$ on $\calP_{sym}^\infty(\R^d)$ with respect to weak convergence.   This follows by a standard argument after rewriting  $C_\infty[\gamma] = \int_{\R^{2d}}c(x_1,x_2)d\mu_2(x_1,x_2)$ and by noting that the class of infinite-dimensional symmetric probability measures in $\calP_{sym}^\infty(\R^d)$ is closed under weak convergence (for a proof of this statement, see e.g. page 54 in \cite{Ald85}).

\item [(b)] Tightness in $\calP_{sym}^\infty(\R^d)$ of the set of all $\gamma\in\calP_{sym}^\infty(\R^{d})$ such that $\gamma\mapsto\mu$ for some fixed $\mu\in\calP(\R^{d})$.

This is proved similarly to Lemma 4.3 from \cite{Vill09}. More precisely, let $\gamma\in\calP_{sym}^\infty(\R^{d})$ such that $\gamma\mapsto\mu$ and $\mu\in\calP(\R^{d})$. Since $\R^d$ is a Polish space, $\mu$ is tight in $\calP(\R^d)$. Then for any $\epsilon>0$ and for any $i\in\N, i\ge 1$,
there exists a compact set $K^i_\epsilon\subset\R^d$, independent of the choice of $\mu$, such that $\mu(\R^d\setminus K^i_\epsilon) \le\frac{\epsilon}{2^i}$. Take $K_\epsilon:=\prod_{i\ge 1}K^i_\epsilon$, which is compact by Tychonoff's theorem. Then we have
$$\gamma(K_\epsilon^c)\le \gamma(\cup_{i\ge 1}(\underbrace{\R^d\times\ldots\times\R^d}_\textrm{i-1 times}\times (K_\epsilon^i)^c\times\R^d\times\ldots))\le\sum_{i\ge 1}\mu((K_\epsilon^i)^c)\le \sum_{i\ge 1}\frac{\epsilon}{2^i}=\epsilon.$$
Tightness now follows since this bound is independent of $\gamma$.

\end{itemize}

Given (a) and (b), the existence of a solution ${\gamma}^{opt}$ to (\ref{FtildeOTgen}) follows analogously to the proof of Theorem 4.1 from \cite{Vill09}: take a minimizing sequence $\gamma^\alpha$, 
extract a weakly convergent subsequence via (b) and Prokhorov's theorem, and pass to the limit via (a).  

One now trivially also obtains a solution to the variational problem in (\ref{infFOT}); namely,
the two-point marginal $\mu_2^{opt}$ of $\gamma^{opt}$ is a solution.
\end{proof}

\subsection{Proof of Theorem \ref{definetti1}}
\label{2}

In this subsection, we determine explicitly the optimal transport functional  ${F}^{OT}_\infty$
introduced in eq. (\ref{FtildeOTgen}), for a large class of cost functions.  As an offshot, we obtain
an interesting probabilistic interpretation of the infinite-body optimal transport functional $C_\infty$ introduced in (\ref{FtildeOTgen}). 

{\bf Proof of Theorem \ref{definetti1}}

We will show explicitly that
\begin{equation}\label{ineq}
\int_{\R^{2d}} c(x,y) \, d\mu_2(x,y) \geq \int_{\mathbb{R}^{2d}} c(x,y) d\mu(x)d\mu(y)
\end{equation}
for any $\mu_2 =\int Q\otimes Qd\nu(Q)$ with $\nu \in  {\cal P}({\cal P}(\R ^d))$, and, if $\hat l>0$ everywhere, equality can only hold when $\mu_2 =\mu \otimes \mu$ is product measure.  The result then follows easily from Theorem \ref{reduction1}.

The central idea  is to re-write both terms in \eqref{ineq}  using  Fourier calculus and elementary probability theory.
For any $Q\in\calP(\R^d)$ such that $\int_{\R^{2d}}\ell(x-y)dQ(x)dQ(y)<\infty$, let $\ell*Q$ and $\hat{Q}$
denote, respectively, the convolution of $\ell$ and $Q$ and the Fourier transform of $Q$, i.e.
$$(\ell*Q)(x ):=\int_{\R^d}\ell(x-y)dQ(y),~~\hat{Q}(z)=\int_{\R^d}e^{-i z\cdot x} dQ(x). $$
The first function may take the value $+\infty$, whereas the second is a bounded continuous function on $\R^d$. In order not to obscure the main argument, we first calculate the integral in (\ref{inf1FOT}) {\it formally}, using the rules of Fourier transform calculus even though $\ell$ and $Q$ are not smooth rapidly decaying functions. The calculation will be justified rigorously in Lemma \ref{Fourier} below. Using, in order of appearance, Fubini's theorem, the definition of the convolution, Plancherel's formula, the Fourier calculus rule $\widehat{f*g}=\hat{f} \, \hat{g}$, and again Fubini's theorem gives 

\begin{eqnarray}
\label{fourier1}
\int_{\mathbb{R}^{2d}} c(x,y) d\mu_2(x,y) &=& \int_{\mathbb{R}^{2d}} \ell(x-y)\int_{\calP(\mathbb{R}^d)}dQ(x)dQ(y)d\nu(Q)\nonumber\\
&=&  \int_{\calP(\mathbb{R}^d)}\int_{\mathbb{R}^{2d}}\ell(x-y)\, dQ(x)\, dQ(y)\, d\nu(Q)\nonumber\\
&=&  \int_{\calP(\mathbb{R}^d)}\int_{\mathbb{R}^{d}}(\ell*Q)(x)\, dQ(x)\, d\nu(Q)\nonumber\\
&=&  \int_{\calP(\mathbb{R}^d)}\int_{\mathbb{R}^{d}}(\widehat{\ell*Q})(z)\bar{\hat{Q}}(z)\, dz \, d\nu(Q)\nonumber\\
&=&  \int_{\calP(\mathbb{R}^d)}\int_{\mathbb{R}^{d}}\hat{\ell}(z)|\hat{Q}(z)|^2 dz\, d\nu(Q)\nonumber\\
&=& \int_{\mathbb{R}^{d}} \hat{\ell}(z) \int_{\calP(\mathbb{R}^d)}|\hat{Q}(z)|^2d\nu(Q)dz.
\end{eqnarray}
By a similar reasoning, we have
\begin{eqnarray}
\label{fourier2}
\int_{\R^{2d}} c(x,y) d\mu(x)d\mu(y) &=& \int_{\R^{2d}} \ell(x-y)\int_{\calP(\R^d)\times \calP(\R^d)}dQ(x)\, d\nu(Q)\, d\tilde{Q}(y) d\nu(\tilde{Q})\nonumber\\
&=&\int_{\calP(\mathbb{R}^d)\times \calP(\mathbb{R}^d)}\int_{\mathbb{R}^{2d}}\ell(x-y)dQ(x)d\tilde{Q}(y)d\nu(Q)d\nu(\tilde{Q})\nonumber\\
&=&\int_{\calP(\mathbb{R}^d)\times \calP(\mathbb{R}^d)} \int_{\mathbb{R}^{d}}(\ell*\tilde{Q})(x)dQ(x)d\nu(Q)d\nu(\tilde{Q})\nonumber\\
&=&\int_{\calP(\mathbb{R}^d)\times \calP(\mathbb{R}^d)}\int_{\mathbb{R}^{d}}(\widehat{\ell*\tilde{Q}})(z)\bar{\hat{Q}}(z)dz\, d\nu(Q)\, d\nu(\tilde{Q})\nonumber\\
&=&\int_{\calP(\mathbb{R}^d)\times \calP(\mathbb{R}^d)} \int_{\mathbb{R}^{d}}\hat{\ell}(z)\hat{\tilde{Q}}(z)\bar{\hat{Q}}(z)\, dz\, d\nu(Q)\, d\nu(\tilde{Q})\nonumber\\
&=&\int_{\mathbb{R}^{d}}\hat{\ell}(z)\Big|\int_{P(\mathbb{R}^d)}\hat{Q}(z)\, d\nu(Q)\Big|^2dz.
\end{eqnarray}
Finally, decomposing the expressions on the right hand side of (\ref{fourier1}) and (\ref{fourier2}) 
into their real and imaginary part gives the formal identity 
\begin{eqnarray} \label{varterms}
\lefteqn{\int_{\mathbb{R}^{2d}} \ell(x-y) d\mu_2(x,y) - \int_{\mathbb{R}^{2d}} \ell(x-y) d\mu(x)d\mu(y)} \nonumber \\
&=& \int_{\mathbb{R}^d}\hat{\ell}(z)\left[ \int_{P(\mathbb{R}^d)} (Re(\hat{Q}(z)))^2d\nu(Q)-{\left(\int_{P(\mathbb{R}^d)}Re(\hat{Q}(z))d\nu(Q)\right)^2}\right] dz \nonumber \\
&+& \int_{\mathbb{R}^d}\hat{\ell}(z)\left[ \int_{P(\mathbb{R}^d)} (Im(\hat{Q}(z)))^2d\nu(Q)-{\left(\int_{P(\mathbb{R}^d)}Im(\hat{Q}(z))d\nu(Q)\right)^2}\right] dz \nonumber \\
&=& \int_{\mathbb{R}^d}\hat{\ell}(z)\left( \var_{\nu(dQ)}Re(\hat{Q}(z)) + \var_{\nu(dQ)}Im(\hat{Q}(z))\right) dz.
\end{eqnarray}
Here $Re(\hat{Q}(z))$ and $Im(\hat{Q}(z))$ denote the real and the imaginary parts of $\hat{Q}(z)$, and $\var_{\nu(dQ)}Re(\hat{Q}(z))$  and  $\var_{\nu(dQ)}Im(\hat{Q}(z))$ are the variances of the random variables $Re(\hat Q(z))$ and $Im(\hat Q(z))$ with respect to the probability measure $\nu(dQ)$. 

The only steps in the derivation of (\ref{fourier1}), (\ref{fourier2}), (\ref{varterms}) which were  nonrigorous due to lack of regularity of $\ell$ and $Q$ were the use of Plancherel's formula and of the Fourier calculus rule $\widehat{\ell*Q}=\hat{\ell}\hat{Q}$. Conventional assumptions would be 
$\ell*Q$ and $Q\in L^2(\R^d)$ for the former, and $\ell$ and $Q\in L^1(\R^d)$ for the latter. As none of these
four assumptions are actually met here, we will need the following generalization of these facts.  Though this will surely not be surprising to experts in the interest of completeness and for lack of a suitable reference, we include a proof in the Appendix.
\begin{lemma} \label{Fourier} If either $\ell\in C_b(\R^d)\cap L^1(\R^d)$, $\hat{\ell}\ge 0$, or
$\ell$ is the Coulomb cost $\ell(z)=1/|z|$ in dimension $d=3$, and $\int \ell(x-y)\, dQ(x)\, dQ(y)<\infty$, 
$\int \ell(x-y)\, d\tilde{Q}(x)\, d\tilde{Q}(y)<\infty$, then $\hat{\ell}|\hat{Q}|^2$, $\hat{\ell}|\hat{\tilde{Q}}|^2$, $\hat{\ell}\hat{Q}\hat{\tilde{Q}}\in L^1(\R^d)$, and 
\begin{eqnarray} 
\label{quadratic} 
   \int_{\R^{2d}} \ell(x-y)\, dQ(x)\, dQ(y) &=& (2\pi)^{-d} \int_{\R^{d}}\hat{\ell}(z) |\hat{Q}(z)|^2 dz, \\
\label{bilinear}
   \int_{\R^{2d}} \ell(x-y)\, dQ(x)\, d\tilde{Q}(y) &=& (2\pi)^{-d} \int_{\R^{d}}\hat{\ell}(z) \hat{Q}(z) \overline{ \hat{\tilde{Q}}(z) } \,  dz.
\end{eqnarray}
In particular, the identities (\ref{fourier1}), (\ref{fourier2}), (\ref{varterms}) hold. 
\end{lemma}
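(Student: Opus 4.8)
\textbf{Proof plan for Lemma \ref{Fourier}.}

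The plan is to prove the two identities \eqref{quadratic} and \eqref{bilinear} by an approximation argument, treating the two cases of the hypothesis separately but with a common strategy: regularize the kernel $\ell$ and/or the measures $Q,\tilde Q$ so that all the formal manipulations (Fubini, Plancherel, the convolution rule $\widehat{\ell*Q}=\hat\ell\,\hat Q$) become rigorously valid, and then pass to the limit using monotone and dominated convergence together with the finiteness hypotheses $\int\ell(x-y)\,dQ(x)\,dQ(y)<\infty$ and $\int\ell(x-y)\,d\tilde Q(x)\,d\tilde Q(y)<\infty$. It suffices to prove \eqref{quadratic}; the bilinear identity \eqref{bilinear} then follows by polarization, writing $\ell(x-y)\,d(Q+\tilde Q)(x)\,d(Q+\tilde Q)(y)$ and subtracting, all terms being finite by hypothesis (using $|\hat Q\,\overline{\hat{\tilde Q}}|\le\frac12(|\hat Q|^2+|\hat{\tilde Q}|^2)$ to control the cross term).

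For case (i), where $\ell\in C_b(\R^d)\cap L^1(\R^d)$ and $\hat\ell\ge0$, I would introduce a Gaussian mollifier $\varphi_\varepsilon(x)=(2\pi\varepsilon)^{-d/2}e^{-|x|^2/(2\varepsilon)}$ and set $Q_\varepsilon:=Q*\varphi_\varepsilon$, which has a bounded Schwartz-class density with $\hat{Q}_\varepsilon(z)=\hat Q(z)e^{-\varepsilon|z|^2/2}$. Since $\ell\in L^1$ and $Q_\varepsilon$ has an $L^1\cap L^\infty$ density, $\ell*Q_\varepsilon\in L^1\cap L^\infty\subset L^2$ and $Q_\varepsilon\in L^1\cap L^2$, so Plancherel and the convolution rule apply verbatim to give $\int\ell(x-y)\,dQ_\varepsilon(x)\,dQ_\varepsilon(y)=(2\pi)^{-d}\int\hat\ell(z)|\hat Q(z)|^2e^{-\varepsilon|z|^2}\,dz$. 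On the left side, $\ell$ is bounded and continuous, so $\int\ell\,dQ_\varepsilon\otimes dQ_\varepsilon\to\int\ell\,dQ\otimes dQ$ as $\varepsilon\to0$ by weak convergence $Q_\varepsilon\otimes Q_\varepsilon\rightharpoonup Q\otimes Q$. On the right side, $\hat\ell\ge0$ and $|\hat Q|^2\ge0$, so $\hat\ell(z)|\hat Q(z)|^2e^{-\varepsilon|z|^2}\uparrow\hat\ell(z)|\hat Q(z)|^2$ and monotone convergence yields the limit (finite, equal to the left side). This simultaneously proves integrability of $\hat\ell|\hat Q|^2$.

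For case (ii), the Coulomb kernel $\ell(z)=1/|z|$ in $d=3$, the kernel is neither bounded nor in $L^1$, so I would instead truncate and mollify the kernel: set $\ell_n(z):=\min(|z|^{-1},n)*\varphi_{1/n}$ (or more cleanly, use $\ell_n(z)=(4\pi)\,\mathcal F^{-1}[|k|^{-2}e^{-|k|^2/n^2}\mathbbm 1_{|k|\le n}]$), which is smooth, integrable, bounded, with $\hat\ell_n(k)=4\pi|k|^{-2}e^{-|k|^2/n^2}\mathbbm 1_{|k|\le n}\ge0$ increasing pointwise to $\hat\ell(k)=4\pi/|k|^2$, and correspondingly $\ell_n\uparrow\ell$ pointwise (this requires choosing the mollification so monotonicity in physical space is preserved; alternatively work with $\ell_n(z)=e^{-z^2/n}/|z|$ whose Fourier transform is an explicit positive decreasing-to-nothing-missing function). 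For each fixed $n$, $\ell_n$ falls under case (i), so $\int\ell_n(x-y)\,dQ(x)\,dQ(y)=(2\pi)^{-3}\int\hat\ell_n(z)|\hat Q(z)|^2\,dz$. Letting $n\to\infty$: the left side converges to $\int\ell(x-y)\,dQ(x)\,dQ(y)$ by monotone convergence (finite by hypothesis), and the right side converges to $(2\pi)^{-3}\int\hat\ell(z)|\hat Q(z)|^2\,dz$ by monotone convergence as well, establishing both \eqref{quadratic} and the integrability of $\hat\ell|\hat Q|^2$.

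The main obstacle I anticipate is the Coulomb case: constructing a mollification $\ell_n$ that is simultaneously (a) covered by case (i) — i.e. in $C_b\cap L^1$ with nonnegative Fourier transform, (b) monotone increasing to $1/|z|$ pointwise in $z$ so that the left-hand monotone convergence applies, and (c) such that $\hat\ell_n$ is monotone increasing to $4\pi/|k|^2$ so that the right-hand monotone convergence applies. The Yukawa regularization $\ell_n(z)=e^{-|z|/n}/|z|$ is the natural candidate: it increases to $1/|z|$ pointwise as $n\to\infty$, lies in $C_b(\R^3\setminus\{0\})\cap L^1$, and has Fourier transform $4\pi/(|k|^2+n^{-2})$ which is nonnegative and increases to $4\pi/|k|^2$. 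The only remaining care is that $\ell_n$ has a singularity at the origin so it is not literally in $C_b(\R^3)$; this is handled by a further inner truncation $\min(\ell_n,m)$ and a second monotone-convergence limit $m\to\infty$, or by observing that the proof of case (i) via the Gaussian mollifier $Q_\varepsilon$ only used boundedness of $\ell$ to get weak-convergence of the left side and in fact goes through for any lower semicontinuous $\ell\ge0$ with $\hat\ell\ge0$ and $\int\ell\,dQ\otimes dQ<\infty$ via an additional monotone truncation, so one can arrange the argument to avoid iterated limits by a diagonal choice.
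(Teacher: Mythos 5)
Your treatment of case (i) follows the paper's proof closely: both regularize $Q$ by a Gaussian, exploit that $\ell, Q_\eps\in L^1$ to justify $\widehat{\ell*Q_\eps}=\hat\ell\,\widehat{Q_\eps}$ and Plancherel, and pass to the limit on the left by weak convergence against the bounded continuous kernel $\ell$ and on the right by monotone convergence using $\hat\ell\ge 0$ and $\widehat{G_\eps}\uparrow 1$. Your polarization argument for \eqref{bilinear} is a valid alternative to the paper's route, which instead applies dominated convergence to the right-hand side with the pointwise bound $|\widehat{Q_\eps}\,\overline{\widehat{\tilde Q_\eps}}|\le\frac12(|\hat Q|^2+|\hat{\tilde Q}|^2)$; to carry out the polarization cleanly you do need to check finiteness of the cross term, e.g.\ via the Cauchy--Schwarz inequality for the positive-definite kernel $\ell(x-y)$, so that the quadratic identity may be applied to $Q+\tilde Q$.

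For the Coulomb case (ii), however, your argument has a genuine gap, and you have in fact put your finger on it yourself without resolving it. Your Yukawa regularization $\ell_n(z)=e^{-|z|/n}/|z|$ satisfies (b) and (c) of your wish-list (it increases to $1/|z|$ and its Fourier transform $4\pi/(|k|^2+n^{-2})$ increases to $4\pi/|k|^2$), but it fails (a): it is in $L^1(\R^3)$ with nonnegative Fourier transform but is not in $C_b(\R^3)$, so case (i) as stated does not apply to it. The first workaround you offer, an inner truncation $\min(\ell_n,m)$, destroys the nonnegativity of the Fourier transform, so the truncated kernel falls outside case (i) as well. The second workaround --- rerunning case (i) for lower semicontinuous $\ell$ with a monotone truncation on the left --- only delivers one inequality: lower semicontinuity of $\ell$ gives $\liminf_{\eps\to 0}\int\ell\, dQ_\eps\otimes dQ_\eps\ge\int\ell\,dQ\otimes dQ$, which, combined with the monotone increase of the right-hand side, yields $(2\pi)^{-d}\int\hat\ell\,|\hat Q|^2\ge\int\ell\,dQ\otimes dQ$, but not the reverse. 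The reverse inequality is equivalent to the pointwise domination $G_\eps*\ell\le\ell$, which for the Coulomb kernel is exactly Newton's screening theorem --- a genuinely Coulombic fact that does not hold for the Yukawa kernel (which is convex in $|z|$ and whose Helmholtz superharmonicity pushes in the opposite direction). This is precisely the ingredient the paper invokes for case (ii) by citing \cite{CF09}; without it, or some substitute domination, your passage to the limit on the left-hand side does not close.
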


Now by the assumption $\hat \ell(z) \ge 0$, the two variance terms on the right hand side of (\ref{varterms}) are nonnegative.  
Because the right hand side of (\ref{varterms}) vanishes when $\nu=\delta_\mu$, i.e. $\mu_2=\mu\otimes\mu$, we conclude that $\mu_2=\mu\otimes\mu$ is a minimizer of the problem in (\ref{inf1FOT}), and hence (by Theorem \ref{reduction1}) that $\gamma=\mu^{\otimes\infty}=\mu\otimes\mu\otimes\cdots$ is a minimizer of (\ref{FtildeOTgen}). 
This establishes Theorem \ref{definetti1} (a).

Before proceeding with the proof of (b), let us note a corollary of the above arguments. By combining
(\ref{infFOT}) and (\ref{varterms}), we obtain:
\begin{corollary} \label{repcor} (Probabilistic interpretation of infinite-body optimal transport) 
Let $c(x,y)=\ell(x-y)$ be as in Theorem \ref{definetti1}. If $\gamma\in\calP_{sym}^\infty(\R^d)$, and
if $\nu\in\calP(\calP(\R^d))$ is the unique associated measure from Proposition \ref{definetti} such 
that $\gamma=\int_{\calP(\R^d)} Q^{\otimes\infty} d\nu(Q)$, then the functional $C_\infty$ introduced in
(\ref{FtildeOTgen}) satisfies
$$
  C_\infty[\gamma] = \int_{\R^{2d}} \ell(x-y)\, \mu(dx) \, \mu(dy) + 
                     \int_{\R^d} \hat{\ell}(z) \left( \var_{\nu(dQ)}Re(\hat{Q}(z)) + \var_{\nu(dQ)}Im(\hat{Q}(z)) \right) dz,
$$
where $\mu$ is the one-body marginal of $\gamma$.
\end{corollary}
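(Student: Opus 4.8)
\textbf{Proof plan for Corollary \ref{repcor}.}

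The plan is to simply collect together identities already established in the proof of Theorem \ref{definetti1}. The statement is a ``for free'' consequence of the Fourier-analytic computation (\ref{varterms}), combined with the reduction of $C_\infty$ to a two-body integral from Lemma \ref{reduction}.

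First I would recall, from the proof of Lemma \ref{reduction} (or directly from the definitions (\ref{FtildeOTgen}) and (\ref{cnformula})), that for any $\gamma\in\calP_{sym}^\infty(\R^d)$ one has $C_\infty[\gamma] = \int_{\R^{2d}} c(x,y)\, d\mu_2(x,y)$, where $\mu_2$ is the two-body marginal of $\gamma$; indeed, by symmetry each of the $\binom N2$ terms in $C_N[\gamma]$ equals $\int_{\R^{2d}} c\, d\mu_2$, so dividing by $\binom N2$ and letting $N\to\infty$ is trivial. Next, by Proposition \ref{definetti}, write $\gamma = \int_{\calP(\R^d)} Q^{\otimes\infty}\, d\nu(Q)$ for the unique associated $\nu$; then (as noted in the proof of Theorem \ref{reduction1}) the one- and two-body marginals of $\gamma$ are $\mu = \int_{\calP(\R^d)} Q\, d\nu(Q)$ and $\mu_2 = \int_{\calP(\R^d)} Q\otimes Q\, d\nu(Q)$ respectively. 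Finally, invoke the identity (\ref{varterms}) — rigorously justified via Lemma \ref{Fourier} under the stated hypotheses on $\ell$ — which expresses exactly
$$
 \int_{\R^{2d}} \ell(x-y)\, d\mu_2(x,y) - \int_{\R^{2d}} \ell(x-y)\, d\mu(x)\, d\mu(y)
 = \int_{\R^d} \hat\ell(z)\bigl( \var_{\nu(dQ)} Re(\hat Q(z)) + \var_{\nu(dQ)} Im(\hat Q(z)) \bigr)\, dz.
$$
Rearranging and substituting $C_\infty[\gamma] = \int_{\R^{2d}} \ell(x-y)\, d\mu_2(x,y)$ on the left gives precisely the claimed formula.

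There is essentially no obstacle here beyond bookkeeping: all analytic subtleties (non-integrability and non-smoothness of $\ell$ and $Q$, validity of Plancherel and the convolution rule) have already been absorbed into Lemma \ref{Fourier}, and the de Finetti representation is supplied by Proposition \ref{definetti}. The only point deserving a word of care is that $\nu$ is \emph{unique}, so that ``the variance with respect to $\nu(dQ)$'' is well defined — but this uniqueness is part of the statement of Proposition \ref{definetti}. Thus the corollary follows by directly chaining the three displayed facts above.
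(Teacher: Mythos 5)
Your proposal is correct and mirrors the paper's own derivation, which presents this corollary as an immediate byproduct of combining the identity $C_\infty[\gamma]=\int_{\R^{2d}}c\,d\mu_2$ from Lemma \ref{reduction}, the de Finetti representation of $\mu$ and $\mu_2$, and the variance decomposition (\ref{varterms}) justified by Lemma \ref{Fourier}. Your bookkeeping (including the remark that uniqueness of $\nu$ makes the $\nu$-variances well defined) matches the paper's intent exactly.
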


It remains to show the uniqueness result (b). Suppose $\gamma$ is a minimizer of (\ref{FtildeOTgen}). 
By de Finetti's theorem (\ref{definetti}), there exists a probability measure $\nu\in\calP(\calP(\R^d))$
such that 
\be \label{gammarep}
  \gamma=\int_{\calP(\R^d)} Q^{\otimes\infty} d\nu(Q).
\ee 
We have to show that $\nu$ is the Dirac mass $\delta_\mu$. 
By Theorem \ref{reduction1}, the two-point marginal   $\mu_2=\int_{\calP(\R^d)} Q \otimes Q d\nu(Q)$f $\gamma$ is a minimizer of the problem in (\ref{inf1FOT}). By (\ref{infOTexplic}), and (\ref{varterms}), it follows
that the right hand side of (\ref{varterms}) is zero, i.e. 
\be \label{zero}
  \int_{\calP(\R^d)} |\hat{Q}(z)|^2 d\nu(Q) - \left| \int_{\calP(\R^d)} \hat{Q}(z)\, d\nu(Q)\right|^2 = 0
  \mbox{ for Lebesgue-a.e.}\, z\in\R^d.
\ee
Because the left hand side equals $\int_{\calP(\R^d)} | \hat{Q}(z) - \int_{\calP(\R^d)} \hat{Q}(z)\, d\nu(Q)|^2 d\nu(Q)$, (\ref{zero}) holds if and only if 
$$
  \hat{Q}(z) = \int_{\calP(\R^d)} \hat{Q}(z) \, d\nu(Q) \mbox{ for }\nu-\mbox{a.e.}\, Q\in\calP(\R^d).
$$
Therefore, by the injectivity of the Fourier transform as a map from $\calP(\R^d)$ to $C_b(\R^d)$, 
$$
  Q = \int_{\calP(\R^d)} Q \, d\nu(Q) \mbox{ for }\nu-\mbox{a.e.}\, Q\in\calP(\R^d).
$$
In other words, $\nu$ must be a Dirac mass (at $\mu$, to satisfy the margial constraint). Substitution into
(\ref{gammarep}) shows that $\gamma$ is the independent measure (\ref{indep}). The proof of Theorem \ref{definetti1} is complete. 

\begin{remark}
\begin{itemize}
\item [(a)] The proof of Theorem \ref{definetti1} relies heavily on the positivity of the Fourier transform of $\ell$, and indeed the conclusion can fail dramatically in the absence of this condition, as shown by 
the following example.
\begin{example} \label{Ex:nonposFT} Let $\ell$ be any cost which is zero at $z=0$ and strictly positive elsewhere. Prototypical
are \\[1mm]
(i) the quadratic cost 
$$
                           \ell(z)=|z|^2, 
$$
in which case (\ref{FtildeOTgen}) corresponds to the infinite marginal limit of the problem studied by Gangbo and Swiech in \cite{GS98}, in the special case of equal marginals; physically, one has replaced the repulsive Coulomb interactions by attractive harmonic oscillator-type interactions;
\\[1mm]
(ii) the smoothly truncated quadratic cost 
$$
   \ell(z) = e^{-|z|^2/2\sigma^2} - e^{-\sigma^2|z|^2/2}, \;\sigma>1,
$$
which behaves like $|z|^2$ near $z=0$ (so that (\ref{FOTinfty}) behaves like the quadratic OT problem (i)
for marginals supported near $0$). Note that (ii) satisfies all assumptions of Theorem \ref{definetti1}
except positivity of the Fourier transform $\hat{\ell}$ (note that $\hat{\ell}(k) = 
(\sqrt{2\pi}\sigma)^d e^{-\sigma^2|k|^2/2} - (\sqrt{2\pi}/\sigma)^d e^{-|k|^2/2\sigma^2}$). 
\\[1mm]
It is clear that the probability measure $\gamma:= (Id,Id,...)_{\#}\mu$ (or, in physics notation, 
$\gamma(x_1,x_2,...)=\mu(x_1)\delta_{x_1}(x_2)\delta_{x_1}(x_3)\cdots $) on $(\R^d)^\infty$ satisfies
\begin{equation*}
C_\infty[\gamma] = \int_{\mathbb{R}^{2d}}c(x,y) d\mu_2(x,y) =0, 
\end{equation*}
where $\mu_2$ is the 2-point marginal of $\gamma$. This is because $\mu_2=(Id,Id)_{\#}\mu$ (or, in physics notation, $\mu_2(x,y)=\mu(x)\delta_{x}(y)$) is concentrated on the diagonal $x=y$, where $c(x,y)=|x-y|^2=0$.
Since trivially $C_\infty\ge 0$, the above $\gamma$ is a minimizer. However, by the positivity of $c(x,y)$off the diagonal, the independent measure $\mu\otimes\mu\otimes\cdots$ is not a minimizer except in the trivial case when $\mu=\delta_x$ for some $x\in\R^d$. 
\end{example}

\item [(b)] A representation similar to the Finetti representation (\ref{definettieqn}) but with $\nu\in\calP(\calP(\R^d))$ replaced by a {\it signed} measure has been established in 
\cite{KerSze06}. Such a representation would allow us to derive (\ref{inf1FOT}), but -- due to the lack 
of sign information -- does not allow to conclude that the independent measure is optimal in the finite-$N$
case. Indeed, in the special case of marginals supported on two points it follows from the analysis in 
\cite{CFKMP13} that the independent measure is not minimizing for {\it any} $N$. For more general densities and cost functions, it follows from Proposition \ref{notprod} below that the independent measure is not minimizing for any N.

\item [(c)] Constraints ensuring that the Fourier transform of a function is positive have been derived for example in \cite{GiPe}.

\item [(d)] As a corollary of our analysis, we recover the following interesting result from
from \cite{Hu07}: if $(X_n)_{n\ge 1}$ is an infinite sequence of exchangeable random variables in $\R^d$ such that $(X_n)_{n\ge 1}$ are pairwise independent (i.e., the joint distribution of any $(X_i,X_j)$ is a product of the distributions of $X_i$ and $X_j$), they are mutually independent. Indeed, let $\gamma$ be the
joint distribution of the infinite sequence $(X_1,X_2,...)$, let $\mu_2$ be the distribution of
$(X_1,X_2)$, and let $\mu$ be the distribution of $X_1$. By the assumption of pairwise independence, 
$\mu_2=\mu\otimes\mu$. Hence, fixing for instance the cost $\ell(z)=e^{-|z|^2}$ and combining eq. (\ref{infOTexplic}) and Lemma \ref{reduction}, it follows that $\gamma$ is a minimizer of (\ref{FtildeOTgen}). 
But the uniqueness result of Theorem \ref{definetti1} (b) implies that the only minimizer of (\ref{FtildeOTgen}) is the independent measure $\mu\otimes\mu\otimes\cdots$. Thus $\gamma=\mu\otimes\mu\otimes\cdots$, as was to be shown. 

Note that for $N<\infty$, pairwise independence does not imply mutual independence. One of the first counter-examples for $N<\infty$ was provided in \cite{Bernstein46}; for further counter-examples see e.g. \cite{DerKlop00}.

\item [(e)] We note that weakening even slightly the assumption of exchangeability of the measure may destroy uniqueness of the minimizer of  (\ref{FtildeOTgen}). To prove this, we apply for example the results from \cite{Janson88} or from \cite{Bradley89}. Therein, various examples are constructed of infinite stationary sequences $(X_n)_{n\ge 1}$ of random variables in $\R^d$ such that $(X_n)_{n\ge 1}$ are pairwise independent, with mean $0$ and finite second moments, but which do not satisfy the central limit theorem. This implies that in these particular cases $(X_n)_{n\ge 1}$ are not mutually independent.    
\end{itemize}
\end{remark}

\section{Connection between the N-body OT problem and the infinite-body OT problem}
\label{4}

We will establish in this section the relationship between the infinite-body optimal transport problem (\ref{FtildeOTgen}) and the corresponding 
$N$-body optimal transportation problem (\ref{Nbodyproblem}), as stated in our second main result Theorem \ref{convergence}. We recall first from (\ref{finOT0}) the optimal cost of the $N$-body problem per particle pair, given for all $N\in\bbn$, $N\ge 2$ by
\begin{equation*}
{F}^{OT}_N[\mu] := \frac{1}{{N\choose 2}} \inf_{\gamma\in\calP_{sym}^N(\R^{d}),\gamma\,\mapsto\mu}{C}_N[\gamma].
\end{equation*}

Moreover, analogously to Lemma \ref{reduction} (see also Theorem III.3 in 
\cite{CFKMP13}) we have
\begin{equation}
\label{finOT1}
 {F}^{OT}_{N}[\mu] = \inf\Bigl\{\int_{\R^{2d}} c(x,y) \, d\mu_2(x,y) \; 
   \Big| \; 
   \mu_2\in \calP^2_{sym}(\R^{d}), \, \mu_2\mapsto\mu, \, \mu_2 \mbox{ is N-representable} \Bigr\}.
\end{equation}
This representation will be used in the proof of Theorem \ref{convergence}.

We first note the following existence result for \ref{finOT1}:

\begin{proposition}
\label{existmu2}
Let $c_N\, : , (\R^d)^N \rightarrow {\R}_+\cup\{\infty\}$ be defined as in (\ref{generalcost}), {with $c$ lower semi-continuous}.
Then there exists at least one solution $\gamma_N$ to (\ref{Nbodyproblem}), and at least one solution $\mu_{2,N}\in \calP_{sym}^2(\R^{d})$ to the minimization problem in (\ref{finOT1}).
\end{proposition}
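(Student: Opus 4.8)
The plan is to prove existence of a minimizer for the $N$-body problem (\ref{Nbodyproblem}) by the direct method of the calculus of variations, exactly as in the standard two-marginal theory (e.g. Theorem 4.1 in \cite{Vill09}), and then simply pass to the two-point marginal to obtain a solution of (\ref{finOT1}). Concretely, first I would take a minimizing sequence $\gamma_N^\alpha\in\calP_{sym}^N(\R^d)$ with $\gamma_N^\alpha\mapsto\mu$ and $C_N[\gamma_N^\alpha]\to{N\choose 2}F_N^{OT}[\mu]$.

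The two ingredients needed are tightness of the admissible set and lower semicontinuity of the cost. For tightness: since all marginals of any admissible $\gamma_N$ equal $\mu$, and $\mu$ is tight on the Polish space $\R^d$, for each $\eps>0$ pick a compact $K_\eps\subset\R^d$ with $\mu(\R^d\setminus K_\eps)\le\eps/N$; then $K_\eps^N$ is compact in $\R^{dN}$ and $\gamma_N(\R^{dN}\setminus K_\eps^N)\le\sum_{i=1}^N\mu(\R^d\setminus K_\eps)\le\eps$, uniformly in $\gamma_N$. Hence the admissible set is tight, and by Prokhorov's theorem we extract a subsequence $\gamma_N^{\alpha}\rightharpoonup\gamma_N$ weakly. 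The limit still has all marginals $\mu$ (marginal maps are weakly continuous) and is still symmetric (the symmetry constraint is closed under weak convergence), so $\gamma_N\in\calP_{sym}^N(\R^d)$ with $\gamma_N\mapsto\mu$. For lower semicontinuity: in case (ii), $c_N=\sum_{i<j}c(x_i,x_j)$ is a sum of nonnegative lower semicontinuous functions, hence itself nonnegative and lower semicontinuous on $\R^{dN}$, so $\gamma\mapsto\int c_N\,d\gamma$ is weakly lower semicontinuous (approximate $c_N$ from below by an increasing sequence of bounded continuous functions and use monotone convergence); in case (i), $c$ is bounded and measurable — here I would instead use the reformulation $C_N[\gamma_N]={N\choose 2}\int_{\R^{2d}}c\,d\mu_2$ where $\mu_2$ is the two-point marginal, note $\mu_2^\alpha\rightharpoonup\mu_2$, and invoke lower semicontinuity of $\mu_2\mapsto\int c\,d\mu_2$ for lower semicontinuous (in particular bounded, after the same monotone approximation, or just continuous) $c$; alternatively both cases are handled uniformly through the lower semicontinuous case since a bounded Borel cost can be dealt with by the same weak-lower-semicontinuity machinery. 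Thus $C_N[\gamma_N]\le\liminf_\alpha C_N[\gamma_N^\alpha]={N\choose 2}F_N^{OT}[\mu]$, so $\gamma_N$ is a minimizer of (\ref{Nbodyproblem}).

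Finally, to produce a solution of (\ref{finOT1}): the identification (\ref{finOT1}) (the finite-$N$ analogue of Lemma \ref{reduction}, i.e. Theorem III.3 in \cite{CFKMP13}) says that $F_N^{OT}[\mu]$ equals the infimum of $\int c\,d\mu_2$ over $N$-representable $\mu_2\in\calP_{sym}^2(\R^d)$ with $\mu_2\mapsto\mu$. Let $\mu_{2,N}$ be the two-point marginal of the minimizer $\gamma_N$ just constructed; by Definition \ref{Def1} it is $N$-representable, it is symmetric, and it has one-body marginal $\mu$. Moreover $\int_{\R^{2d}}c\,d\mu_{2,N}=\frac{1}{{N\choose 2}}C_N[\gamma_N]=F_N^{OT}[\mu]$, so $\mu_{2,N}$ attains the infimum in (\ref{finOT1}). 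This completes the proof.

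The main obstacle is the lower semicontinuity step in case (ii): the Coulomb-type cost is only lower semicontinuous and may take the value $+\infty$, so one cannot directly use bounded continuous test functions; the resolution is the standard one of writing $c$ as a supremum of bounded continuous functions (or an increasing limit of $c\wedge M$, each of which is itself lower semicontinuous) and applying the portmanteau/monotone convergence argument, but one must be careful that the hypothesis guarantees the admissible set is nonempty with finite cost — which is exactly the assumption that there exists $\gamma_0\in\calP_{sym}^\infty(\R^d)$, $\gamma_0\mapsto\mu$, with $\int c(x_1,x_2)\,d\gamma_0<\infty$, whose $N$-body marginal provides an admissible competitor of finite cost. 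Everything else is routine adaptation of the classical existence proof.
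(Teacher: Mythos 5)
Your proof is correct and follows essentially the same route as the paper's (the direct method: tightness via the fixed marginal, Prokhorov, weak lower semicontinuity of $\int c_N\,d\gamma$ for lower semicontinuous nonnegative $c$, then read off the two-point marginal for (\ref{finOT1})). The paper mentions handling the symmetry constraint by symmetrizing a non-symmetric minimizer rather than by noting, as you do, that the symmetric class is weakly closed; both observations are routine and interchangeable, and your supplementary worry about nonemptiness/finite infimum is unnecessary here since the proposition's hypotheses do not include the finite-cost condition of Theorem \ref{convergence} and the existence statement is vacuously true when the infimum is $+\infty$.
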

\begin{proof}
The proof follows from a standard compactness argument, similar to those found in \cite{Vill09}, combined with the fact that a non symmetric measure $\gamma$ on $\mathbb{R}^{Nd}$ may be symmetrized without changing the total cost $C_N[\gamma]$, due to the linearity of the functional and the constraints, and the symmetry of $c$.
\end{proof}
To establish (\ref{limitOT}), we will use the following result which allows us to approximate $N$-representable measures by infinitely representable ones.  The result is actually a translation of Theorem $13$ in \cite{DF80} from the language of random variables into that of probability measures. 
For purposes of simplicity and completeness, unlike \cite{DF80} we limit ourselves to euclidean spaces,
and include a proof.

\begin{proposition}
Let $\gamma_N\in\calP_{sym}^N(\R^{d})$. Then there exists an infinitely representable measure ${\mathbb{P}}_{2,\gamma_N}$ such that
\begin{equation}
\label{Finapprox}
||\gamma_2-{\mathbb{P}}_{2,\gamma_N}||\le\frac{1}{N}~~\mbox{and}~~\gamma_1={\mathbb{P}}_{1,\gamma_N}.
\end{equation}
For $1\le k\le N$, we denoted in (\ref{Finapprox}) by $\gamma_k$ the canonical projection of $\gamma_N$ on $P_{sym}(\R^{dk})$ (that is, $\gamma_k\in\calP^k_{sym}(\R^{d})$ is a marginal of $\gamma_N)$, and by $||\gamma_k-\mathbb{P}_{k,\nu}||$ the total variation distance, that is, 
$$||\gamma_k-{\mathbb{P}}_{k,\gamma_N}||:=\sup_{\{f:\R^d\rightarrow\R, \atop f~\mbox{measurable},~ |f|\le 1\}}|\gamma_k(f)-\mathbb{P}_{k,\nu}(f)|.$$ 
\end{proposition}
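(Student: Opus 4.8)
The plan is to use the resampling construction of Diaconis and Freedman: draw an $N$-tuple according to $\gamma_N$, then generate an infinite sequence by sampling \emph{with replacement} from it. Precisely, for $\vec x=(x_1,\dots,x_N)\in(\R^d)^N$ write $Q_{\vec x}:=\frac1N\sum_{i=1}^N\delta_{x_i}\in\calP(\R^d)$ for the empirical measure and set
\[
  \mathbb{P}_{\gamma_N}\ :=\ \int_{(\R^d)^N} Q_{\vec x}^{\otimes\infty}\; d\gamma_N(\vec x)\ \in\ \calP_{sym}^\infty(\R^d).
\]
Since $\vec x\mapsto Q_{\vec x}$ is continuous (hence Borel) from $(\R^d)^N$ into $\calP(\R^d)$ and $Q\mapsto Q^{\otimes\infty}$ is Borel, this is a well-defined symmetric (exchangeable) Borel probability measure on $(\R^d)^\infty$; it has de Finetti form $\int Q^{\otimes\infty}\,d\nu(Q)$ with $\nu$ the push-forward of $\gamma_N$ under $\vec x\mapsto Q_{\vec x}$. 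Let $\mathbb{P}_{k,\gamma_N}$ denote its $k$-body marginal; each $\mathbb{P}_{k,\gamma_N}$ is then infinitely representable by construction, and $\mathbb{P}_{2,\gamma_N}$ will be the measure claimed in the statement.

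First I would verify the exact identity $\gamma_1=\mathbb{P}_{1,\gamma_N}$: for bounded measurable $g:\R^d\to\R$,
\[
  \mathbb{P}_{1,\gamma_N}(g)=\int_{(\R^d)^N}\frac1N\sum_{i=1}^N g(x_i)\,d\gamma_N(\vec x)=\frac1N\sum_{i=1}^N\int g(x_i)\,d\gamma_N(\vec x)=\int g(x_1)\,d\gamma_N(\vec x)=\gamma_1(g),
\]
the third step using the symmetry of $\gamma_N$. Next, for the total variation bound I would exploit the ``sampling with replacement'' picture of $\mathbb{P}_{2,\gamma_N}$: conditionally on $\vec x$, a draw from $Q_{\vec x}^{\otimes2}$ is $(x_I,x_J)$ with $I,J$ independent and uniform on $\{1,\dots,N\}$. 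Splitting on whether $I=J$ (probability $1/N$) or $I\neq J$ (probability $1-\frac1N$) and then integrating over $\gamma_N$ yields
\[
  \mathbb{P}_{2,\gamma_N}\ =\ \Bigl(1-\tfrac1N\Bigr)R_{\neq}\ +\ \tfrac1N\,R_{=},
\]
where $R_{=}$ is the push-forward of $\gamma_1$ under the diagonal map $x\mapsto(x,x)$ and $R_{\neq}$ is the law of $(x_I,x_J)$ conditioned on $I\neq J$. The point is that $R_{\neq}=\gamma_2$ exactly: for bounded measurable $f$,
\[
  R_{\neq}(f)=\frac1{N(N-1)}\sum_{i\neq j}\int f(x_i,x_j)\,d\gamma_N(\vec x)=\frac1{N(N-1)}\sum_{i\neq j}\gamma_2(f)=\gamma_2(f),
\]
each summand equalling $\gamma_2(f)$ by the symmetry of $\gamma_N$. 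Hence $\mathbb{P}_{2,\gamma_N}-\gamma_2=\frac1N\bigl(R_{=}-\gamma_2\bigr)$, and since $R_{=}$ and $\gamma_2$ are probability measures, $\|R_{=}-\gamma_2\|\le1$, so $\|\gamma_2-\mathbb{P}_{2,\gamma_N}\|\le\frac1N$.

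There is no real analytic difficulty here; the substance is the combinatorial bookkeeping in the last two displays, and the step deserving care is that the split of $\mathbb{P}_{2,\gamma_N}$ into its ``distinct-index'' and ``coincident-index'' parts is carried out only \emph{after} averaging over $\vec x\sim\gamma_N$, with exchangeability being precisely what identifies the distinct-index part with $\gamma_2$; this is where the hypothesis $\gamma_N\in\calP_{sym}^N(\R^d)$ enters. (The same reasoning with $k$ indices gives $\|\gamma_k-\mathbb{P}_{k,\gamma_N}\|\le1-\frac{N(N-1)\cdots(N-k+1)}{N^k}$, but only $k=2$ is needed in the sequel.)
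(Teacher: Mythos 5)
Your proposal is correct and follows essentially the same route as the paper: the infinitely representable measure is the marginal of $\int Q_{\vec x}^{\otimes\infty}\,d\gamma_N(\vec x)$, where $Q_{\vec x}$ is the empirical measure of $\vec x$, and the bound comes from the explicit decomposition $\mathbb{P}_{2,\gamma_N}=\bigl(1-\tfrac1N\bigr)\gamma_2+\tfrac1N R_{=}$, which is exactly the paper's identity $\mathbb{P}_{2,\gamma_N}(A_2)=\tfrac{N^2-N}{N^2}\gamma_2(A_2)+\tfrac1N\gamma_1(\{\omega_1:(\omega_1,\omega_1)\in A_2\})$ (the paper just invokes Kolmogorov extension rather than writing the de Finetti form explicitly, but this is the same object). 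Your ``sampling with replacement'' gloss makes the combinatorics transparent, and your final bound $\|R_{=}-\gamma_2\|\le 1$ implicitly uses the ``sup over sets'' version of total variation, as does the paper's own proof, so there is no gap.
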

\begin{proof}
To prove (\ref{Finapprox}), let us define for each $k\ge 1$ the measure ${\mathbb{P}}_{k,\gamma_N}\in {\cal P}(\R^{kd})$ by
\begin{equation}
\label{approxmeasure}
{\mathbb{P}}_{k,\gamma_N}(A_k):=\int_{\R^{Nd}}\left(\frac{{\delta}_{\omega_1}+{\delta}_{\omega_2}+\ldots+{\delta}_{\omega_N}}{N}\right)^{\otimes k}(A_k)\,d\gamma_N(\omega),~~\mbox{for all}~A_k\in\R^{kd}.
\end{equation}

By Kolmogorov's extension theorem, ${\mathbb{P}}_{k,\gamma_N}$ can be extended to an infinite-dimensional symmetric measure ${\mathbb{P}}_{\infty,\nu}$ in $\calP_{sym}^\infty(\R^d)$, which has ${\mathbb{P}}_{k,\gamma_N}$ as marginal for each $k\ge 1$. Moreover, for all $A_2\in {\R}^{2d}$ we obtain from (\ref{approxmeasure}) that
 \begin{eqnarray*}
 {\mathbb{P}}_{2,\gamma_N}(A_2)&=&\int_{\R^{Nd}}\left(\frac{{\delta}_{\omega_1}+{\delta}_{\omega_2}+\ldots+{\delta}_{\omega_N}}{N}\right)^{\otimes 2}(A_2)\,d\gamma_N(\omega)\\
 &=&\frac{N^2-N}{N^2}\gamma_2(A_2)+\frac{1}{N}\gamma_1(\{\omega_1:(\omega_1,\omega_1)\in A_2\}
),
 \end{eqnarray*}
 and therefore
 $$\left |\gamma_2(A_2)- {\mathbb{P}}_{2,\gamma_N}(A_2) \right |=\frac{1}{N}\left | \gamma_2(A_2)-\gamma_1(\omega_1:(\omega_1,\omega_1)\in A_2)\right |\le\frac{1}{N}.
$$
\end{proof}
We will use this result directly to easily establish Theorem \ref{convergence} part (i).  For part (ii), we will need the following intermediate Lemma.
\begin{lemma}\label{infrepNrep}
A symmetric measure $\mu_2$ on $\mathbb{R}^{2d}$ is infinitely representable if and only if it is $N$-representable for all $N$.
\end{lemma}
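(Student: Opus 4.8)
The plan is to prove the two directions separately, with the forward direction being essentially trivial and the reverse direction being the substantive one.

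For the "only if" direction: suppose $\mu_2$ is infinitely representable, so there exists $\gamma_\infty \in \calP_{sym}^\infty(\R^d)$ whose two-body marginal (in the sense of (\ref{densityrep})) is $\mu_2$. For any fixed $N \ge 2$, I would simply take $\gamma_N$ to be the canonical projection of $\gamma_\infty$ onto $\calP_{sym}^N(\R^d)$, i.e. the marginal on the first $N$ coordinates. This $\gamma_N$ is symmetric (symmetry of $\gamma_\infty$ on all finite blocks restricts to symmetry of $\gamma_N$), and its two-body marginal is again $\mu_2$ by the consistency of marginals. Hence $\mu_2$ is $N$-representable, and since $N$ was arbitrary, it is $N$-representable for all $N$.

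For the "if" direction: suppose $\mu_2$ is $N$-representable for every $N \ge 2$. The idea is to produce an infinitely representable measure by a compactness/limiting argument. First I would invoke Proposition 3.3 (the Diaconis--Freedman comparison result): for each $N$, pick $\gamma_N \in \calP_{sym}^N(\R^d)$ witnessing $N$-representability of $\mu_2$, and construct the associated infinitely representable measure $\mathbb{P}_{\infty,\gamma_N} \in \calP_{sym}^\infty(\R^d)$ with two-body marginal $\mathbb{P}_{2,\gamma_N}$ satisfying $\|\mu_2 - \mathbb{P}_{2,\gamma_N}\| \le 1/N$. Now extract a weakly convergent subsequence $\mathbb{P}_{\infty,\gamma_{N_k}} \rightharpoonup \gamma_\infty$: tightness follows because all these measures share the fixed one-body marginal $\mu$ (namely $\mathbb{P}_{1,\gamma_N} = \gamma_1 = \mu$), so the tightness argument in the proof of Theorem \ref{existinfcase}(b) applies verbatim. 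The limit $\gamma_\infty$ lies in $\calP_{sym}^\infty(\R^d)$ since this class is closed under weak convergence (as cited there, via \cite{Ald85}). Finally, the two-body marginal map $\gamma \mapsto \gamma_2$ is continuous for weak convergence, so $(\mathbb{P}_{\infty,\gamma_{N_k}})_2 = \mathbb{P}_{2,\gamma_{N_k}} \rightharpoonup (\gamma_\infty)_2$; but $\mathbb{P}_{2,\gamma_{N_k}} \to \mu_2$ in total variation, hence also weakly, so by uniqueness of weak limits $(\gamma_\infty)_2 = \mu_2$. Therefore $\mu_2$ is infinitely representable.

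The main obstacle is the reverse direction, and within it the delicate point is justifying that the weak limit $\gamma_\infty$ genuinely belongs to $\calP_{sym}^\infty(\R^d)$ and that the two-body marginal passes to the limit; both rely on invoking the appropriate continuity/closedness facts already established (or cited) in the proof of Theorem \ref{existinfcase}, rather than on any new estimate. One should also be slightly careful that $\mathbb{P}_{1,\gamma_N} = \mu$ exactly (which is the second assertion of Proposition 3.3), so that the uniform tightness input is literally the same $\mu$ for every $N$; this is what makes Prokhorov applicable without any further hypotheses on $\mu$ beyond $\mu \in \calP(\R^d)$.
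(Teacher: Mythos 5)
Your proof is correct and follows essentially the same route as the paper's: both directions hinge on the Diaconis--Freedman approximation (Proposition 3.3) plus the weak closedness of the class of infinitely representable measures, which in turn rests on weak closedness of $\calP_{sym}^\infty(\R^d)$ (cited from Aldous). The only difference is that you make the underlying compactness/tightness argument explicit (lifting to the infinite product, extracting a weak limit via Prokhorov, and projecting back), whereas the paper compresses this into a one-line appeal to weak closedness of the set of infinitely representable pair measures.
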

\begin{proof}
It is clear that an infinitely representable measure is $N$-representable for all $N$.  On the other hand, if $\mu_2$ is $N$-representable for all $N$, the preceding result yields a sequence of infinitely representable measures converging weakly to $\mu_2$.  By the weak closedness of the set $\calP_{sym}^\infty(\R^d)$ (see \cite{Ald85}), the set of representable symmetric measures on $\mathbb{R}^{2d}$ is weakly closed and the result follows.
\end{proof}

{\bf Proof of Theorem \ref{convergence}}
We first prove part (i) (the bounded cost case) directly from Proposition \ref{Finapprox}.  Letting $\mu_{2,N}$ solve (\ref{finOT1}), we have by Proposition \ref{Finapprox} an infinitely representable $\mu_{2,\infty}$ with 1-body marginal $\mu$ such that $||\mu_{2,N}-\mu_{2,\infty}|| \leq \frac{1}{N}$.  Therefore
\begin{eqnarray}
F_N^{OT}[\mu] &=& \int_{\mathbb{R}^{2d}}c(x,y)d\mu_{2,N}\\
&\geq &\int_{\mathbb{R}^{2d}}c(x,y)d\mu_{2,\infty} -\frac{||c||_\infty }{N}\\
&\geq & F_{\infty}^{OT}[\mu] -\frac{||c||_\infty }{N}. \label{similartolieboxford}
\end{eqnarray}
Noting that $F_N^{OT}[\mu] \leq F_{\infty}^{OT}[\mu]$ and taking the limit in the above inequality yields the result. 

To prove assertion (i), we use Lemma \ref{infrepNrep}.  Let $\mu_{2,N}$ solve (\ref{finOT1}).  By the tightness of the set of symmetric measures on $\mathbb{R}^{2d}$ with common marginal $\mu$ and by Prokhorov's theorem, we can, after passing to a subsequence, assume $\mu_{2,N}$ converges to some symmetric  $\mu_2$ whose marginal is also $\mu$. Now, it is clear that the set of $M$-representable measures is weakly closed (for a proof of this statement, see e.g. page 54 in \cite{Ald85}), and therefore, $\mu_2$ is $M$-representable for each fixed $M$ (as $\mu_{2,N}$ is $M$-representable for $N \geq M$, by Lemma \ref{Nrepressmall}).  Therefore, by the preceding Lemma, $\mu_2$ is infinitely representable.

By lower semi-continuity of $c$, we therefore have 

\begin{equation}
\liminf_{N\rightarrow \infty}F_N^{OT}[\mu] =\liminf_{N\rightarrow \infty} \int_{\mathbb{R}^{2d}}c(x,y)d\mu_{2,N} \geq \int_{\mathbb{R}^{2d}}c(x,y)d\mu_{2} \geq F_{\infty}^{OT}[\mu].
\end{equation}

As we clearly have $F_N^{OT}[\mu] \leq F_{\infty}^{OT}[\mu]$ for each $N$, this implies the desired result.

\qed

\begin{remark}
\begin{itemize}
\item [(a)] We note here that the proof in fact yields that any convergent subsequence of optimal $\mu_{2,N}$ in the $N$-body problem converges to a solution to the infinite body problem.  Whenever the minimizer $\mu_{2,\infty}$ in the infinite body problem is unique (for example, under the conditions in Theorem \ref{definetti1} part (ii)), this implies that the $\mu_{2,N}$ converge to $\mu_{2,\infty}$.  For bounded costs, the proof also yields abound on the rate of convergence of $\frac{||c||_\infty}{N}$.

\item [(b)] Theorem $13$ from \cite{DF80} proves the following: Let $\gamma_N\in\calP_{sym}(\R^{dN})$. Then there exists a measure $\nu$ on the set of probability measures on ${\cal P}(\R^d)$, such that
\begin{equation}
\label{Finapprox1}
||\gamma_k-{\mathbb{P}}_{k,\nu}||\le\frac{k(k-1)}{N}~\mbox{for all}~1\le k\le N.
\end{equation}
For some particular cases of marginals $\gamma_1$ the bounds in (\ref{Finapprox1}) have been improved in \cite{DF87}.
\end{itemize}
\end{remark}

Next we point out a variant of our result in Corollary \ref{IHDL} on the inhomogeneous high-density limit of the SCE functional introduced in (\ref{FOTN}), (\ref{CN}). By eq. (\ref{Veescaled}) together
with the characterization (\ref{finOT1}) of $F^{OT}_N$ as an infimum over representable pair measures
(or alternatively Theorem III.3 in \cite{CFKMP13}), we have
\begin{equation}
\label{finOT01}
{V}_{ee}^{SCE}[\rho] = {N\choose 2} \inf\Bigl\{\int_{\R^{6}} \frac{1}{|x-y|} \, d\mu_2(x,y) \; 
   \Big| \; 
   \mu_2\in \calP^2_{sym}(\R^{3}), \, \mu_2\mapsto\rho/N, \, \mu_2 \mbox{ N-representable} \Bigr\},
\end{equation}
where $\rho$ is any integrable nonnegative function on $\R^3$ with $\int_{\R^3}\rho=N$. 
This formula suggests a natural hierarchy
of approximations as introduced in \cite{CFKMP13}: for $k=2,3,...$ we define
\begin{equation}
\label{restricfinOT}
{V}_{ee}^{SCE,k}[\rho] :={N\choose 2} \inf\Bigl\{\int_{\R^{6}} \frac{1}{|x-y|} \, d\mu_2(x,y) \; 
   \Big| \; 
   \mu_2\in \calP_{sym}(\R^{6}), \, \mu_2\mapsto\rho/N, \, \mu_2 \mbox{ k-representable} \Bigr\}.
\end{equation}
That is, we replace the requirement that $\mu_2$ is $N$-
representable by the modified requirement that it be $k$-representable. Because $k$-representability becomes a stronger and stronger condition as $k$ increases, we have the following
chain of inequalities
$${V}_{ee}^{SCE,2}[\rho]\le\cdots\le  V_{ee}^{SCE,3}[\rho]\le \cdots \le {V}_{ee}^{SCE,N}[\rho] = V_{ee}^{SCE}[\rho]\le{V}_{ee}^{SCE,N+1}[\rho]\le \cdots \,.$$
The functionals ${V}_{ee}^{SCE,k}$  
can be thought of as reduced models for the energy of
strongly correlated electrons which take into account $k$-
body correlations. 
\begin{corollary}
\label{selfinter} Assume that $\rho\in L^1(\R^3)$, $\rho\ge 0$, $\int_{\R^3}\rho = N$ for some natural number $N\ge 2$. Then 
\begin{equation}
\label{vscelimit}
\lim_{k\rightarrow\infty} V_{ee}^{SCE,k}[\rho] = \frac12 (1-\frac{1}{N}) \int_{R^6}\frac{1}{|x-y|}\rho(x)\rho(y) dx dy.
\end{equation}
\end{corollary}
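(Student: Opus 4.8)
The plan is to combine the hierarchy structure of the $V_{ee}^{SCE,k}$ functionals with the convergence machinery already established in Theorem~\ref{convergence}. First I would observe that $V_{ee}^{SCE,k}[\rho]$, up to the fixed combinatorial factor $\binom{N}{2}$, is precisely an optimal transport cost of the form appearing in Theorem~\ref{convergence}, but with the marginal $\mu := \rho/N$ and with the Coulomb cost $c(x,y) = 1/|x-y|$ in dimension $d=3$; the only difference from the $N$-body problem $F^{OT}_N[\mu]$ is that the representability parameter $k$ is now an independent index tending to $\infty$ rather than being tied to $N$. Since $\rho \in L^1(\R^3)$ with $\rho \ge 0$, we have $\mu = \rho/N \in \calP(\R^3)$, and the finiteness hypothesis of Theorem~\ref{convergence} (existence of an infinitely representable $\gamma_0$ with the right marginal and finite two-body cost) is satisfied by taking $\gamma_0 = \mu^{\otimes\infty}$, provided $\int_{\R^6} \frac{1}{|x-y|}\mu(x)\mu(y)\,dx\,dy < \infty$. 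This last integral need not be finite for general $\rho \in L^1$, so the first genuine step is to handle that: either add the finiteness as an implicit standing assumption (it holds e.g. for $\rho \in L^1 \cap L^3$), or note that if the mean-field integral diverges then by the Lieb--Oxford-type lower bound and monotonicity of the hierarchy both sides of \eqref{vscelimit} are $+\infty$ and the identity holds trivially.

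Next I would set $\mu := \rho/N$ and invoke Theorem~\ref{convergence} directly, but with a reindexing: define, for $k \ge 2$, the quantity $G_k[\mu] := \frac{1}{\binom{k}{2}}\inf\{\tilde C_k[\gamma_k] : \gamma_k \in \calP^k_{sym}(\R^3),\, \gamma_k \mapsto \mu\}$. By the representability reformulation \eqref{finOT1}, $G_k[\mu]$ equals the infimum of $\int_{\R^6} \frac{1}{|x-y|}\,d\mu_2$ over $k$-representable symmetric $\mu_2$ with marginal $\mu$ — which is exactly $\binom{N}{2}^{-1} V_{ee}^{SCE,k}[\rho]$ after substituting the definitions. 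Hence Theorem~\ref{convergence}, applied with the Coulomb cost (which is symmetric, Borel, and lower semicontinuous as a map into $[0,\infty]$, so case (ii) applies), gives $\lim_{k\to\infty} G_k[\mu] = F^{OT}_\infty[\mu]$. Then Theorem~\ref{definetti1}(a)(ii) identifies $F^{OT}_\infty[\mu] = \int_{\R^6} \frac{1}{|x-y|}\mu(x)\mu(y)\,dx\,dy$. Finally, unwinding: $V_{ee}^{SCE,k}[\rho] = \binom{N}{2} G_k[\mu]$, so $\lim_{k\to\infty} V_{ee}^{SCE,k}[\rho] = \binom{N}{2} \int_{\R^6}\frac{1}{|x-y|} \frac{\rho(x)}{N}\frac{\rho(y)}{N}\,dx\,dy = \frac{N(N-1)}{2}\cdot\frac{1}{N^2}\int_{\R^6}\frac{\rho(x)\rho(y)}{|x-y|}\,dx\,dy = \frac12\left(1 - \frac1N\right)\int_{\R^6}\frac{\rho(x)\rho(y)}{|x-y|}\,dx\,dy$, which is exactly \eqref{vscelimit}.

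The main obstacle I anticipate is not the limit itself — that is essentially a direct citation of Theorems~\ref{definetti1} and~\ref{convergence} once the bookkeeping of the $\binom{N}{2}$ versus $\binom{k}{2}$ factors and the $\rho$ versus $\rho/N$ normalization is done carefully — but rather making sure Theorem~\ref{convergence} is literally applicable. Specifically, Theorem~\ref{convergence} is phrased for the $N$-body problem where the number of marginals and the representability index coincide, and one should verify that its proof (which goes through the representability characterization \eqref{finOT1}, the Diaconis--Freedman approximation Proposition~3.2, Lemma~\ref{infrepNrep}, and weak closedness of $k$-representable measures) uses only that the representability index tends to $\infty$, not that it equals the number of particles. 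Reading the proof of Theorem~\ref{convergence} part (ii), this is indeed the case: one takes optimal $\mu_{2,k}$ that are $k$-representable, extracts a weakly convergent subsequence by tightness (the common marginal $\mu$ is fixed), uses weak closedness of $M$-representability for each fixed $M$ together with Lemma~\ref{Nrepressmall} to conclude the limit is $M$-representable for all $M$, hence infinitely representable by Lemma~\ref{infrepNrep}, and then applies lower semicontinuity of the Coulomb cost. So the cleanest exposition is to state explicitly that the argument of Theorem~\ref{convergence}(ii) applies verbatim with "$N$-representable" replaced throughout by "$k$-representable, $k \to \infty$", and then perform the normalization computation above; the only nontrivial caveat to flag is the $L^1 \cap L^3$ (or finiteness of the mean-field energy) restriction needed to invoke Theorem~\ref{definetti1}(ii).
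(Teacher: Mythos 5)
Your proof is essentially identical to the paper's: rewrite $V_{ee}^{SCE,k}[\rho] = \binom{N}{2} F_k^{OT}[\rho/N]$, apply Theorem~\ref{convergence} (with the marginal $\mu=\rho/N$ held fixed and $k$ as the number of bodies) together with Theorem~\ref{definetti1}, and then perform the $\binom{N}{2}/N^2$ arithmetic. Your caveat about finiteness is a fair one --- $\rho\in L^1$ alone does not guarantee the mean-field integral is finite, so the paper's proof tacitly carries over the $L^1\cap L^3$ assumption from Corollary~\ref{IHDL} --- whereas the worry about ``reindexing'' Theorem~\ref{convergence} is a non-issue, since in that theorem $\mu$ is arbitrary and fixed and the index $N$ is simply a dummy label for the number of marginals.
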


Physically, the factor $1-1/N$ is a self-interaction correction, and the right hand side of (\ref{vscelimit}) is a self-interaction corrected mean field energy. Thus the approximation via density representability of infinite order remembers that there are only ${N\choose 2}$ interaction
terms, not $N^2/2$. 

\begin{proof} By the definition (\ref{restricfinOT}), for any $\rho$ as above we have
$$
   V_{ee}^{SCE,k}[\rho] = {N\choose 2} F_k^{OT}[\rho/N],
$$
that is to say, up to scaling factors $V_{ee}^{SCE,k}[\rho]$ is the optimal cost of a $k$-body optimal 
transport problem. By Theorems \ref{convergence} and \ref{definetti1}, the right hand side converges to
$$
   {N\choose 2} \int_{\R^6} \frac{ \frac{\rho(x)}{N} \, \frac{\rho(y)}{N} }{|x-y|} \, dx \, dy
$$
as $k\to\infty$. This establishes the corollary.  
\end{proof}

Finally we note that, in contrast to the $N=\infty$ case, minimizers of the $N$-body optimal transport problem exist are typically not given by the mean field measure for any $N<\infty$. 

\begin{proposition}
\label{notprod}
Let $c_N:(\R^d)^N \rightarrow {\R}_+\cup\{\infty\}$ be defined as in (\ref{generalcost}). Assume that there is some point $x=(x_1,x_2,...,x_N) \in \mathbb{R}^{Nd}$ such that $c_N$ is $C^2$ near 
$x$, $D^2_{x_ix_j}c(x) \neq 0$ for some $i \neq j$, and the measure $\mu$ has positive density near each $x_i \in \mathbb{R}^d$.  Then the product measure $\mu\otimes\mu$ on $\mathbb{R}^{2d}$ is not optimal for the $2$-body optimal transport problem with $N$-representability constraint  (\ref{finOT1}), for any $N < \infty$.

Note that for the Coulomb cost, the conditions on the cost hold for any $x=(x_1,x_2,...,x_N)$ away from the diagonal; that is, for any $x$ such that $x_i \neq x_j$ for all $i \neq j$.
\end{proposition}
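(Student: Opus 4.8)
The plan is to show that the independent coupling $\mu\otimes\mu$ is beaten in the variational problem (\ref{finOT1}) by the two-body marginal of a suitable small symmetric perturbation of the independent $N$-point measure $\mu^{\otimes N}$. The whole argument rests on the following analytic fact, which I would establish at the end: \emph{there exist bounded functions $\phi,\psi\colon\R^d\to\R$ with $\int\phi\,d\mu=\int\psi\,d\mu=0$, supported respectively in small balls $B_1\ni x_i$ and $B_2\ni x_j$ on which $c$ is $C^2$, such that}
\[
  I:=\int_{\R^{2d}}c(x,y)\,\phi(x)\,\psi(y)\,d\mu(x)\,d\mu(y)\neq0 .
\]
Since $\phi,\psi$ are bounded and supported where $c$ is continuous, $I$ is a finite real number; throughout I use the (implicit, and valid in all cases of interest, e.g.\ the Coulomb cost with $\mu\in L^1\cap L^3$) finiteness $\int_{\R^{2d}}c\,d(\mu\otimes\mu)<\infty$.

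Granting such $\phi,\psi$, I would set
\[
  g(y_1,\dots,y_N):=\sum_{1\le k<l\le N}\bigl(\phi(y_k)\psi(y_l)+\psi(y_k)\phi(y_l)\bigr),
  \qquad
  \gamma_N^\varepsilon:=(1+\varepsilon g)\,\mu^{\otimes N}.
\]
As $g$ is bounded and symmetric under permutations of its arguments, $\gamma_N^\varepsilon\in\calP_{sym}^N(\R^d)$ for every $\varepsilon$ with $|\varepsilon|\,\|g\|_\infty<1$ (unit total mass follows from $\int g\,d\mu^{\otimes N}=0$, nonnegativity from boundedness). Integrating out $N-1$, resp.\ $N-2$, of the coordinates and using $\int\phi\,d\mu=\int\psi\,d\mu=0$ shows that the one-body marginal of $g\,\mu^{\otimes N}$ vanishes (so $\gamma_N^\varepsilon\mapsto\mu$) while its two-body marginal has density $\phi(y_1)\psi(y_2)+\psi(y_1)\phi(y_2)$ with respect to $\mu\otimes\mu$. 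Consequently the two-body marginal $\mu_{2,\varepsilon}$ of $\gamma_N^\varepsilon$ is $N$-representable, has one-body marginal $\mu$, and — using $c(x,y)=c(y,x)$ —
\[
  \int_{\R^{2d}}c\,d\mu_{2,\varepsilon}=\int_{\R^{2d}}c\,d(\mu\otimes\mu)+2\varepsilon I .
\]
Choosing $\varepsilon$ of sign opposite to $I$ (and small enough) makes the right-hand side strictly less than $\int c\,d(\mu\otimes\mu)$, so $\mu\otimes\mu$ is not a minimizer of (\ref{finOT1}); this proves Proposition \ref{notprod} modulo the analytic fact.

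For the fact, I would argue by contradiction: suppose $I=0$ for every bounded mean-zero (w.r.t.\ $\mu$) $\phi$ supported in $B_1$ and every bounded mean-zero $\psi$ supported in $B_2$, where $B_1\ni x_i$ and $B_2\ni x_j$ are closed balls so small that $c$ is $C^2$ on $B_1\times B_2$ (they have positive $\mu$-measure because $\mu$ has positive density near $x_i$ and $x_j$). For fixed such $\psi$, the bounded function $x\mapsto\int c(x,y)\psi(y)\,d\mu(y)$ is then orthogonal in $L^2(\mu|_{B_1})$ to every mean-zero function, hence $\mu$-a.e.\ constant on $B_1$, hence (by continuity of $c$) constant on all of $B_1$. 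Subtracting the values at two points $x,x'\in B_1$ and letting $\psi$ vary shows $y\mapsto c(x,y)-c(x',y)$ is constant on $B_2$ for all $x,x'\in B_1$; fixing a base point $x'\in B_1$ then yields $c(x,y)=f(x)+h(y)$ on $B_1\times B_2$ with $h(y)=c(x',y)$. Differentiating (legitimate since $c\in C^2$), the mixed block $D^2_{x_ix_j}c$ vanishes identically on $B_1\times B_2$, contradicting $D^2_{x_ix_j}c(x)\neq0$. This establishes the fact. The closing remark about the Coulomb cost follows because a direct computation shows the mixed Hessian of $1/|x-y|$ is a nonzero matrix whenever $x\neq y$.

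I expect the only genuinely non-routine step to be the last one — extracting the additive separability $c(x,y)=f(x)+h(y)$ from the vanishing of the bilinear form with kernel $c$ on mean-zero functions — together with the mild care needed to keep all integrals finite for singular costs, which is precisely why the supports of $\phi$ and $\psi$ are confined to a region where $c$ is bounded and why one invokes $\int c\,d(\mu\otimes\mu)<\infty$.
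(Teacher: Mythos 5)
Your argument is correct, and it takes a genuinely different route from the paper's. The paper argues by contradiction using a structural result from \cite{Pass12} (Theorem 2.3 there): if $\mu\otimes\mu$ were optimal for (\ref{finOT1}) then $\mu^{\otimes N}$ would be optimal for the $N$-body problem, and since $\mu$ has positive density near each $x_i$ the support of $\mu^{\otimes N}$ has full Hausdorff dimension $dN$ near $x$; but the cited dimension bound caps that dimension by $\lambda_0+\lambda_-$ of the off-diagonal Hessian $G$ in (\ref{G}), forcing $G$ to be negative semi-definite, which is impossible once $D^2_{x_ix_j}c\neq 0$ (one exhibits a vector on which $G$ is positive). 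Your proof instead constructs an explicit first-order perturbation $\gamma_N^\varepsilon=(1+\varepsilon g)\mu^{\otimes N}$ of the independent $N$-point measure with the same one-body marginal and the correct symmetry, computes that the cost of its two-body marginal is $\int c\,d(\mu\otimes\mu)+2\varepsilon I$, and reduces the problem to finding localized mean-zero $\phi,\psi$ with $I\neq 0$; the contradiction argument you give (vanishing of the bilinear form on all such test functions would force $c(x,y)=f(x)+h(y)$ near $x$, hence $D^2_{x_ix_j}c\equiv 0$) is sound, using continuity of $c$ and positivity of the density of $\mu$ to upgrade "$\mu$-a.e. constant" to "constant". Your version is more elementary and self-contained (no appeal to the external dimension theorem), and it in fact shows that $\mu\otimes\mu$ is not even a local minimizer; the paper's version is shorter modulo the citation and exploits a structural fact about supports of optimal measures that is of independent interest. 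Both approaches use the positive-density hypothesis for essentially the same reason (full support of $\mu^{\otimes N}$ near $x$ / positive $\mu$-measure of the small balls).
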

\begin{proof}
Fix $N<\infty$. The proof is by contradiction; assume that the product measure  $\mu^{\otimes 2}$ on $\mathbb{R}^{2d}$ is optimal for (\ref{finOT1}). Then the product measure $\mu^{\otimes N}$ on $\mathbb{R}^d \times \mathbb{R}^d \times ...\times \mathbb{R}^d$ must be optimal for the $N$-body optimal transport formulation of the problem (\ref{FtildeOTgen}).  It is clear that the support of the product measure has full Hausdorff dimension $dN$ near the point $x$. On the other hand, Theorem 2.3 from \cite{Pass12} implies that for any optimizer $\gamma$, for some neighbourhood $U$ of $x$, the dimension of the $supp(\gamma) \cap U$ is no more than  $\lambda_0+\lambda_-$, where $supp(\gamma)$ is the support of $\gamma$, and $\lambda_+, \lambda_-$ and $\lambda_0$ are respectively the number of positive, negative and zero eigenvalues of the off-diagonal part of the Hessian

\begin{equation} \qquad
G=\label{G}
\begin{bmatrix}
0 & D^{2}_{x_1x_2}c & D^{2}_{x_1x_3}c & ...&D^{2}_{x_1x_N}c \\ 
D^{2}_{x_2x_1}c & 0 & D^{2}_{x_2x_3}c & ...&D^{2}_{x_2x_N}c \\ 
D^{2}_{x_3x_1}c & D^{2}_{x_3x_2}c & 0 & ...&D^{2}_{x_3x_N}c \\
...&...&...&...&...,\\
D^{2}_{x_Nx_1}c & D^{2}_{x_Nx_2}c & D^{2}_{x_Nx_3}c & ...&0
\end{bmatrix}
\end{equation}
evaluated at $x$. Therefore, if $\mu^{\otimes N}$ is optimal, $G$ must have no positive eigenvalues and therefore must be negative semi-definite.  This is clearly not true; as $D^{2}_{x_ix_j}c \neq 0$, we can choose $u, v \in \mathbb{R}^d$ such that $u \cdot D^{2}_{x_ix_j}c\cdot v^T >0$.  Then 
\begin{eqnarray*}
[0,...,0,u,0,...,0,v,0,....,0] \cdot G \cdot [0,...,0,u,0...,0,v, ,0,...,0]^T &= &v \cdot D^{2}_{x_jx_i}c\cdot u^T + u\cdot D^{2}_{x_ix_j}c\cdot v^T \\
&=& 2u\cdot D^{2}_{x_ix_j}c\cdot v^T\\
& >& 0,
\end{eqnarray*}
contradicting the negative definiteness of $G$.

\end{proof}

\section{Appendix}
Here we prove the result stated in Lemma \ref{Fourier} that a well known formula from Fourier transform calculus on $\R^d$ remains valid for integrals involving two probability measures and a cost function such as the Coulomb cost. 

The formula would be straightforward if the probability measures and the cost function belonged to $L^1(\R^d)$. The generalization to arbitrary probability measures was essential in the proof of our main result that 
the solution to infinite-body optimal transport problems for costs with positive Fourier transform is the independent product measure. We note that the generalization is needed even in the case of smooth marginals, since general probability measures always appear in the de Finetti 
representation (\ref{definettieqn}) of trial measures.

{\bf Proof of Lemma \ref{Fourier}}. First we deal with the case $\ell\in C_b(\R^d)\cap L^1(\R^d)$. 
We begin by proving (\ref{quadratic}). The idea is to regularize $Q$. Let $G_\eps$ be a Gaussian with standard deviation $\eps$, i.e. $G_\eps(x)=(2\pi\eps^2)^{-d/2} e^{-|x|^2/2\sigma^2}$. Then 
$\widehat{G_\eps}(k) = e^{-\eps |k|^2/2}$. By inspection, $\hat{G_\eps}$ converges {\it monotonically}
to $1$ as $\eps\to 0$. The monotonicity of this convergence is actually needed in the argument below.

Now for any given probability measure $Q$ on $\R^d$, let $Q_\eps$ be the regularization $Q_\eps(x)=(G_\eps * Q)(x) = \int_{\R^d} G_\eps(x-y)\, dQ(y)$. Then $Q_\eps\in L^1(\R^d)\cap L^\infty(\R^d)$; in particular $Q_\eps\in L^2$. Next we claim that $\ell*Q_\eps\in L^2(\R^d)$. This is because $\ell*Q_\eps$ is, as a convolution of two $L^1$ functions, in $L^1$, and also, as a convolution of an $L^1$ and an $L^\infty$ function, in $L^\infty$. 

Since $\ell$ and $Q_\eps$ are in $L^1(\R^d)$, it is straightforward from the definition of the Fourier transform on $L^1$ as a convergent integral that $\widehat{\ell * Q_\eps} = \widehat{\ell} \, \widehat{Q_\eps}$. It follows that formula (\ref{quadratic}) is valid for the regularized measure $Q_\eps$, i.e.
\be \label{approxquad}
  \int_{\R^{2d}} \ell(x-y)\, dQ_\eps(x) dQ_\eps(y) = (2\pi)^{-d} \int_{\R^d} \hat{\ell}(z) \, |\hat{Q_\eps}(z)|^2 dz.
\ee
It remains to pass to the limit $\eps\to 0$. Since $Q_\eps\rightharpoonup Q$ weakly (that is to say 
$\int_{\R^d} \varphi Q_\eps \to \int_{\R^d} \varphi \, dQ$ for all $\varphi$ belonging to the 
space $C_b(\R^d)$ of bounded continuous functions), we have $Q_\eps\otimes Q_\eps \rightharpoonup
Q\otimes Q$, and since the function $(x,y)\mapsto \ell(x,y) \in C_b(\R^{2d})$ we infer that the left
hand side of (\ref{approxquad}) converges to the left hand side of (\ref{quadratic}). Since
$\widehat{Q_\eps} = \widehat{G_\eps} \, \widehat{Q}$, $\widehat{\ell}\ge 0$, and $\widehat{G_\eps}$
converges monotonically to $1$, the integrand on the right hand side of (\ref{approxquad}),
$\widehat{\ell} |\widehat{Q_\eps}|^2 = \hat{\ell}|\widehat{G_\eps}|^2|\widehat{Q}|^2$, converges
monotonically to $\widehat{\ell}$. Hence by monotone convergence, the right hand side of (\ref{approxquad}) tends to that of (\ref{quadratic}), establishing (\ref{quadratic}). 

It remains to prove (\ref{bilinear}). Analogously to the proof of (\ref{quadratic}) we obtain
\be \label{approx}
  \int_{\R^{2d}} \ell(x-y) \, Q_\eps(x) \, \tilde{Q}_\eps(y) \, dx \, dy = 
  (2\pi)^{-d} \int_{\R^d} \widehat{\ell} \widehat{Q_\eps} \overline{\widehat{Q_\eps}}
\ee
as well as the fact that the left hand side tends to the left hand side of (\ref{bilinear})
as $\eps\to 0$. The argument for passing to the limit on the right hand side no longer works, 
since now the integrand is not in general nonnegative. Instead we use that by the assumption of
finiteness of $\int\ell(x-y)dQ(x)dQ(y)$ and $\int\ell(x-y)d\tilde{Q}(x)d\tilde{Q}(y)$ and by
(\ref{quadratic}), $\hat{\ell}|\widehat{Q}|^2$ and $\hat{\ell}|\widehat{\tilde{Q}}|^2$ are in 
$L^1(\R^d)$. This together with the pointwise estimate
$$
    | \widehat{Q_\eps} \, \widehat{\tilde{Q}_\eps} | \le \frac12 \left(    |\widehat{Q}|^2 + 
    |\widehat{\tilde{Q}}|^2 \right)
$$
(which relies on $\widehat{Q_\eps} = \widehat{G_\eps}\widehat{Q}$ and $|\widehat{G_\eps}|\le 1$)
shows that the convergence $\widehat{\ell}\widehat{Q_\eps}\widehat{\tilde{Q}_\eps}\to
\widehat{\ell}\widehat{Q}\widehat{\tilde{Q}}$ is dominated. Hence by the dominated convergence theorem the right hand side of (\ref{approx}) tends to that of (\ref{bilinear}) as $\eps\to 0$. 
This completes the proof of Lemma \ref{Fourier} in the case $\ell\in C_b\cap L^1$. 

It remains to deal with the Coulomb case $d=3$, $\ell(x)=\frac{1}{|x|}$. In this case the above proof does not work, for instance because weak convergence of the probability measure $Q_\eps\otimes Q_\eps$ is insufficient to pass to the limit in the left hand side of (\ref{approxquad}) due to the fact that $(x,y)\mapsto\ell(x,y)$ no longer belongs to the space $C_b$ of bounded continuous functions associated by duality. However the desired Fourier identities were established 
in \cite{CF09}, with passage to the limit in (\ref{approxquad}) being achieved with the help of Newton's screening theorem. The latter is the special Coulombic property
that for any continuous radially symmetric function $\varphi$ with compact support, $\varphi*1/|\cdot|=1/|\cdot|$ outside the support of $\varphi$ (or, physically speaking, the 
potential exerted by a radial charge distribution onto a point outside it is the same as that of the point charge obtained by placing all its mass at the center). 
\\[2mm]

\section{Conclusions}
Mean field approximations that reduce complicated many-body interactions to interactions of each particle with a collective mean field
are ubiquitous in many areas of physics such as quantum mechanics, statistical mechanics, electromagnetism, 
and continuum mechanics, as well as in other fields such as mathematical biology, probability theory, 
or game theory. 

Motivated by questions in many-electron quantum mechanics, we have presented a novel and quite general 
mathematical picture of how mean field approximations are rigorously related to underlying many-body 
interactions. Namely, for interactions with positive Fourier transform
they emerge as the unique solution to a naturally associated infinite-body optimal 
transport problem.

\section{Acknowledgements}
This work was begun when all three authors attended the 2012 trimester program at the Hausdorff Research Insitute for Mathematics in Bonn on 'Mathematical
challenges of materials science and condensed matter physics'. We wish to thank the program organizers Sergio Conti, Richard James, Stephan Luckhaus, Stefan M\"uller, Manfred Salmhofer, and Benjamin Schlein for their hospitality. We are also indebted to Paola Gori-Giorgi for pointing out to us a very interesting alternative proof of Corollary 1.3  which is implicit in Ref. \cite{RSG11} (see Remark 1.4). B.P.  acknowledges the support of a University of Alberta start-up grant and National Sciences and Engineering Research Council of Canada Discovery Grant number 412779-2012.

\end{document}